\documentclass[sigconf,nonacm,balance=false]{acmart}

\usepackage{amsmath}  
\usepackage{algorithm}
\usepackage{algpseudocode}
\usepackage{booktabs}
\usepackage[capitalize,noabbrev]{cleveref}
\usepackage{enumitem}
\usepackage{flushend}
\usepackage{graphicx}
\usepackage{multirow}
\usepackage{tikz}
\usepackage{xcolor}
\usepackage{xspace}
\usetikzlibrary{arrows.meta,calc,patterns,decorations.pathreplacing}

\definecolor{tdSM}{HTML}{E69F00}
\definecolor{tdSMd}{HTML}{B07800}
\definecolor{tdSMf}{HTML}{FCE6BE}
\definecolor{tdCE}{HTML}{0072B2}
\definecolor{tdCEf}{HTML}{D6E8F5}
\definecolor{tdNET}{HTML}{009E73}
\definecolor{tdNETf}{HTML}{D5F0E7}
\definecolor{tdNVS}{HTML}{AA3377}
\definecolor{tdNVSf}{HTML}{F1DEE9}
\definecolor{tdTAX}{HTML}{D55E00}
\definecolor{tdTAXf}{HTML}{F8D3BD}
\definecolor{tdINK}{HTML}{2F2F2F}
\definecolor{tdMUTE}{HTML}{6B6B6B}
\definecolor{tdRULE}{HTML}{BDBDBD}
\definecolor{tdCMP}{HTML}{C4C4C4}
\definecolor{tdCMPf}{HTML}{E8E8E8}
\definecolor{tdID0}{HTML}{F9E39A}
\definecolor{tdID1}{HTML}{A6D8CF}
\definecolor{tdID2}{HTML}{E8968A}
\definecolor{tdID3}{HTML}{7F86AE}
\definecolor{tdID4}{HTML}{BCCB85}

\newcommand{\tdfont}{\fontsize{7}{8}\selectfont\sffamily}
\newcommand{\tdsmall}{\fontsize{7}{8}\selectfont\sffamily}
\newcommand{\tdtiny}{\fontsize{7}{8}\selectfont\sffamily}
\newcommand{\tdbold}{\fontsize{7}{8}\selectfont\sffamily\bfseries}
\DeclareRobustCommand{\tdref}[1]{\tikz[baseline=(tdn.base)]\node[tdstep](tdn){#1};}

\tikzset{
  td/.style={font=\tdfont, line width=0.45pt, line cap=round, line join=round,
    >={Latex[length=2.6pt,width=2.2pt]}},
  tdtile/.style={draw=tdINK!50, line width=0.25pt},
  tdtax/.style={draw=tdTAX, line width=0.6pt, dash pattern=on 1.5pt off 1pt},
  tdbox/.style={draw=tdINK, line width=0.45pt, rounded corners=1.2pt},
  tdce/.style={draw=tdCE, line width=0.7pt},
  tdnvs/.style={draw=tdNVS, line width=0.7pt},
  tdnet/.style={draw=tdNET, line width=0.85pt, dash pattern=on 2.2pt off 1.1pt},
  tdsm/.style={draw=tdSMd, line width=0.6pt},
  tdstep/.style={circle, draw=tdINK, fill=white, line width=0.4pt, inner sep=0pt,
    minimum size=8.8pt, font=\fontsize{7}{7}\selectfont\sffamily\bfseries},
}

\setlist[itemize]{labelindent=0.3em,leftmargin=*,labelsep=0.4em,topsep=2.5pt,itemsep=1.5pt,parsep=0pt}
\setlist[enumerate]{label=(\arabic*),labelindent=0.3em,leftmargin=*,labelsep=0.4em,topsep=2.5pt,itemsep=1.5pt,parsep=0pt}
\graphicspath{{figures/}}
\makeatletter
\renewcommand{\@authorfont}{\large}
\renewcommand{\@affiliationfont}{\small}
\makeatother
\renewcommand\footnotetextcopyrightpermission[1]{}
\AtBeginDocument{}

\newcommand{\system}{\textsc{TIDE}\xspace}
\newcommand{\code}[1]{\ifmmode\mathtt{#1}\else{\normalfont\ttfamily #1}\fi}
\newcommand{\dtype}[1]{\ifmmode\mathsf{#1}\else{\normalfont\sffamily #1}\fi}

\begin{document}

\title[The Endgame of FSDP Communication]{
The Endgame of FSDP Communication:
Layout-Aware Collectives on Copy Engines, RDMA, and NVSwitch}

\author{Zejian Wang}
\authornote{This work was conducted during the author's internship at KlingAI.}
\email{wangzejian@kuaishou.com}
\affiliation{%
  \institution{KlingAI Research}
  \city{Shanghai}
  \country{China}
}

\author{Yifan Zhao}
\email{zhaoyifan14@kuaishou.com}
\affiliation{%
  \institution{KlingAI Research}
  \city{Shanghai}
  \country{China}
}

\author{Kaihui Ren}
\authornotemark[1]
\email{renkaihui@kuaishou.com}
\affiliation{%
  \institution{KlingAI Research}
  \city{Beijing}
  \country{China}
}

\author{Xiaoling Zhou}
\authornotemark[1]
\email{zhouxiaoling03@kuaishou.com}
\affiliation{%
  \institution{KlingAI Research}
  \city{Beijing}
  \country{China}
}

\author{Yubin Wang}
\email{wangyubin05@kuaishou.com}
\affiliation{%
  \institution{KlingAI Research}
  \city{Beijing}
  \country{China}
}

\author{Chenhao Guan}
\authornotemark[1]
\email{guanchenhao@kuaishou.com}
\affiliation{%
  \institution{KlingAI Research}
  \city{Shanghai}
  \country{China}
}

\author{Lin Zhang}
\email{zhanglin22@kuaishou.com}
\affiliation{%
  \institution{KlingAI Research}
  \city{Shanghai}
  \country{China}
}

\author{Yufeng Li}
\email{liyufeng06@kuaishou.com}
\affiliation{%
  \institution{KlingAI Research}
  \city{Shanghai}
  \country{China}
}

\author{Zhikang Mo}
\email{mozhikang@kuaishou.com}
\affiliation{%
  \institution{KlingAI Research}
  \city{Shanghai}
  \country{China}
}

\author{Zhuoru Ma}
\email{mazhuoru@kuaishou.com}
\affiliation{%
  \institution{KlingAI Research}
  \city{Shanghai}
  \country{China}
}

\author{Yingtong Xiong}
\email{xiongyingtong@kuaishou.com}
\affiliation{%
  \institution{KlingAI Research}
  \city{Shanghai}
  \country{China}
}

\author{Jihu Guo}
\authornotemark[1]
\email{guojihu@kuaishou.com}
\affiliation{%
  \institution{KlingAI Research}
  \city{Shanghai}
  \country{China}
}

\author{Yiqiao Liao}
\authornote{Corresponding authors.}
\email{liaoyiqiao@kling.ai}
\affiliation{%
  \institution{KlingAI Research}
  \city{Beijing}
  \country{China}
}

\author{Chengru Song}
\authornotemark[2]
\email{songchengru@kuaishou.com}
\affiliation{%
  \institution{KlingAI Research}
  \city{Beijing}
  \country{China}
}

\renewcommand{\shortauthors}{Zejian Wang et al.}

\begin{abstract}
Fully Sharded Data Parallelism (FSDP2) maintains parameters and
gradients as per-parameter tensors, whereas its collective path
operates on flat, rank-major buffers. Converting between these layouts
copies all-gather outputs into parameter storage and packs gradients
into full-group \dtype{fp32} buffers. Layout conversion alone accounts
for 6.4\% of step time in Llama-3.1 70B training at 8K tokens per GPU.
Communication kernels impose a second cost: a default NCCL
reduce-scatter can slow a concurrent GEMM that would otherwise
complete in a single wave by 69\%.

We present \system, a set of zero-copy collectives built on public
NCCL~2.31+ APIs. \system preserves FSDP2's per-parameter storage and
compute kernels while folding layout placement into collective
addressing. \system-AG delivers shards directly to parameter storage
using NVSwitch multicast and paced recursive-doubling RDMA, with no
SMs used for transport or placement. \system-RS consumes per-parameter
\dtype{bf16} gradients directly and reduces them into \dtype{fp32}
shards---a mixed-precision mode that NCCL does not provide---using copy
engines and short-lived thread blocks (CTAs) for node-local accumulation
and NVSwitch for cross-node reduction. A two-phase pipeline separates gradient reads from network
progress; Slot-Reuse optionally stages gradients in vacated all-gather
slots.

On 64 GPUs, \system reduces step time by up to \textbf{27.1\%}
relative to Stock FSDP2 and \textbf{6.0\%} relative to
FSDP2+SymmMem across Llama-3.1 70B, Wan2.2, and Qwen3-30B-A3B. It
reduces reduce-scatter SM time by factors of $12.5$--$14.5$ and cuts
interference with concurrent compute by up to an order of magnitude.
Its reduce-scatter buffer remains nearly constant from 16 to 64 GPUs
at a fixed per-GPU output size, and its peak memory is up to
4.7~GiB below FSDP2+SymmMem's and, with Slot-Reuse on Wan2.2, below
Stock FSDP2's.
\end{abstract}

\keywords{distributed training, fully sharded data parallelism,
collective communication, tensor layout, communication--computation overlap}

\maketitle

\section{Introduction}
\label{sec:intro}

Fully Sharded Data Parallelism
(FSDP2~\cite{liang2025torchtitan,zhao2023pytorchfsdp}), rooted in
ZeRO-3~\cite{rajbhandari2020zero}, is the workhorse paradigm for training
dense and Mixture-of-Experts (MoE) foundation
models~\cite{fedus2022switch,lepikhin2020gshard,rajbhandari2022deepspeedmoe}
across hundreds to thousands of GPUs~\cite{jiang2024megascale,qian2024alibaba}.
FSDP2 shards parameters, gradients, and optimizer states across $N$
data-parallel ranks, rebuilds each layer's parameters with an
\emph{all-gather} (AG), and aggregates its gradients with a
\emph{reduce-scatter} (RS).

\begin{figure}[t]
  \centering
\begin{tikzpicture}[td, x=1pt, y=1pt,
  tdlinkbg/.style={draw=tdRULE, line width=0.3pt},
  tdbuf/.style={draw=tdSMd, line width=0.6pt, dash pattern=on 1.5pt off 1pt}]
\def\tdU{2.4}\def\tdG{1.5}\def\tdWb{7}\def\tdWf{13}
\def\tdXa{22}\def\tdXb{50}\def\tdXc{86}\def\tdXd{122}\def\tdXe{154}\def\tdXf{194}
\def\tdYf{-20}\def\tdYt{-86}
\def\tdC#1#2#3#4#5{%
  \pgfmathsetmacro\tds{{2,1,1}[#4]}%
  \ifnum#1=0 \pgfmathsetmacro\tdo{#5*4+{0,2,3}[#4]}%
  \else \pgfmathsetmacro\tdo{{0,8,12}[#4]+#5*\tds}\fi
  \pgfmathsetmacro\tdy{#3-(\tdo+0.5*\tds)*\tdU-#2*#5*\tdG}}
\def\tdCol#1#2#3#4#5{
  \foreach \tdp in {0,1,2}{\foreach \tdr in {0,1,2,3}{%
    \tdC{#4}{#5}{#2}{\tdp}{\tdr}%
    \filldraw[tdtile, fill=tdID\tdr] (#1,\tdy+0.5*\tds*\tdU) rectangle ++(#3,-\tds*\tdU);}}}
\def\tdLink#1#2#3#4#5#6#7#8#9{
  \foreach \tdp in {0,1,2}{\foreach \tdr in {0,2,3}{%
    \tdC{#3}{#4}{#2}{\tdp}{\tdr}\let\tdyA\tdy
    \tdC{#7}{#8}{#6}{\tdp}{\tdr}\let\tdyB\tdy
    \draw[tdlinkbg] (#1,\tdyA) -- (#5,\tdyB);}}
  \foreach \tdp in {0,1,2}{%
    \tdC{#3}{#4}{#2}{\tdp}{1}\let\tdyA\tdy
    \tdC{#7}{#8}{#6}{\tdp}{1}\let\tdyB\tdy
    \draw[#9, line width=0.75pt] (#1,\tdyA) -- (#5,\tdyB);}}
\def\tdRanks#1#2#3{
  \foreach \tdr in {0,1,2,3}{%
    \node[font=\tdtiny, text=tdMUTE, anchor=#3, inner sep=1pt]
      at (#1,#2-\tdr*4*\tdU-2*\tdU-\tdr*\tdG) {r\tdr};}}
\def\tdParams#1#2{
  \foreach \tdp/\tdm in {0/4,1/10,2/14}{%
    \node[font=\tdtiny, text=tdMUTE, anchor=west, inner sep=1pt]
      at (#1,#2-\tdm*\tdU) {p\tdp};}}

\node[font=\tdbold, anchor=base] at (57.5,-8) {Forward all-gather};
\node[font=\tdbold, anchor=base] at (166.5,-8) {Backward reduce-scatter};
\draw[tdRULE, dash pattern=on 1pt off 1.2pt] (111,-1) -- (111,-131);
\node[font=\tdbold, rotate=90] at (5,\tdYf-21.5) {FSDP2};
\node[font=\tdbold, rotate=90] at (5,\tdYt-21.5) {TIDE};

\tdLink{\tdXa+\tdWb}{\tdYf}{0}{1}{\tdXb}{\tdYf}{0}{0}{draw=tdSMd}
\tdLink{\tdXb+\tdWb}{\tdYf}{0}{0}{\tdXc}{\tdYf}{1}{0}{draw=tdSMd}
\tdLink{\tdXd+\tdWb}{\tdYf}{1}{0}{\tdXe}{\tdYf}{0}{0}{draw=tdSMd}
\tdLink{\tdXe+\tdWf}{\tdYf}{0}{0}{\tdXf}{\tdYf}{0}{1}{draw=tdSMd}
\tdCol{\tdXa}{\tdYf}{\tdWb}{0}{1}
\tdCol{\tdXb}{\tdYf}{\tdWb}{0}{0}
\tdCol{\tdXc}{\tdYf}{\tdWb}{1}{0}
\tdCol{\tdXd}{\tdYf}{\tdWb}{1}{0}
\tdCol{\tdXe}{\tdYf}{\tdWf}{0}{0}
\tdCol{\tdXf}{\tdYf}{\tdWf}{0}{1}
\draw[tdbuf] (\tdXb-1.6,\tdYf+1.6) rectangle (\tdXb+\tdWb+1.6,\tdYf-16*\tdU-1.6);
\draw[tdbuf] (\tdXe-1.6,\tdYf+1.6) rectangle (\tdXe+\tdWf+1.6,\tdYf-16*\tdU-1.6);
\node[font=\tdsmall, text=tdSMd, anchor=north] at (\tdXb+0.5*\tdWb,\tdYf-16*\tdU-2.5) {$NS$};
\node[font=\tdsmall, text=tdSMd, anchor=north] at (\tdXe+0.5*\tdWf,\tdYf-16*\tdU-2.5) {$2NS$};
\node[font=\tdsmall, text=tdSMd, anchor=south] at (39.5,\tdYf+1.5) {AG};
\node[font=\tdsmall, text=tdSMd, anchor=south] at (71.5,\tdYf+1.5) {copy-out};
\node[font=\tdsmall, text=tdSMd, anchor=south] at (141.5,\tdYf+1.5) {\_chunk\_cat};
\node[font=\tdsmall, text=tdSMd, anchor=south] at (180.5,\tdYf+1.5) {RS $\Sigma$};
\tdRanks{\tdXa-0.5}{\tdYf}{east}
\tdRanks{\tdXf+\tdWf+0.5}{\tdYf}{west}
\tdParams{\tdXc+\tdWb+0.5}{\tdYf}

\tdLink{\tdXa+\tdWb}{\tdYt}{0}{1}{\tdXc}{\tdYt}{1}{0}{draw=tdCE}
\tdLink{\tdXd+\tdWb}{\tdYt}{1}{0}{\tdXf}{\tdYt}{0}{1}{draw=tdCE}
\tdCol{\tdXa}{\tdYt}{\tdWb}{0}{1}
\tdCol{\tdXc}{\tdYt}{\tdWb}{1}{0}
\tdCol{\tdXd}{\tdYt}{\tdWb}{1}{0}
\tdCol{\tdXf}{\tdYt}{\tdWf}{0}{1}
\node[font=\tdsmall, text=tdCE, anchor=south] at (57.5,\tdYt+1.5) {copy engine / RDMA, 0 SMs};
\node[font=\tdsmall, text=tdCE, anchor=south] at (164.5,\tdYt+1.5)
  {copy engine, \textcolor{tdSMd}{TMA $\Sigma$}, \textcolor{tdNVS}{NVSwitch $\Sigma$}};
\tdRanks{\tdXa-0.5}{\tdYt}{east}
\tdRanks{\tdXf+\tdWf+0.5}{\tdYt}{west}
\tdParams{\tdXc+\tdWb+0.5}{\tdYt}

\begin{scope}[shift={(22,-140)}, font=\tdsmall]
  \draw[draw=tdSMd, line width=0.75pt] (0,0) -- (9,0);
  \node[anchor=west, inner sep=1.5pt] at (9,0) {SM kernel};
  \draw[tdbuf] (52,-2.6) rectangle ++(7,5.2);
  \node[anchor=west, inner sep=1.5pt] at (60,0) {temporary buffer};
  \draw[draw=tdCE, line width=0.75pt] (124,0) -- (133,0);
  \node[anchor=west, inner sep=1.5pt] at (133,0) {copy engine / RDMA};
\end{scope}
\end{tikzpicture}
  \caption{FSDP2's contract boundary for one parameter group (3
  parameters, $N=4$ ranks, shard size $S$; tile color marks the owning
  rank, and colored lines trace rank~1's pieces). Stock FSDP2 converts layouts with SM
  kernels and temporary buffers; \system writes each piece once, at its
  final address.}
  \Description{Two rows of tensor-layout diagrams comparing stock FSDP2
  and TIDE for all-gather and reduce-scatter.}
  \label{fig:boundary-tax}
\end{figure}

\begin{table*}[t]
  \centering
  \caption{Design space of fully sharded communication subsystems.}
  \label{tab:design-space}
  \small
  \setlength{\tabcolsep}{3pt}
  \begin{tabular*}{\textwidth}{@{\extracolsep{\fill}}lccccc@{}}
    \toprule
    \textbf{Dimension} & \textbf{Stock FSDP2} & \textbf{FSDP2+SymmMem}~\cite{pytorchsymmmem2025} & \textbf{Flat-Buffer FSDP}~\cite{megatronfsdp2025,vescalefsdp2025} & \textbf{PACE}~\cite{pace2025} & \textbf{\system (Ours)} \\
    \midrule
    Preserves \code{DTensor} storage & Yes & Yes & \textbf{No} (flat/uneven) & Yes & \textbf{Yes} \\
    AG direct landing & No & No & Yes & Yes & \textbf{Yes}\textsuperscript{$\ddagger$} \\
    RS input preparation & Full-group pack & Full-group pack & Fused wgrad / copy & SM staging / cast & \textbf{In-Place Read}\textsuperscript{$\dagger$} \\
    \addlinespace[2pt]
    AG SM budget & 16--32 & 0  & / & 0--1 & \textbf{0} \\
    \addlinespace[2pt]
    RS SM budget & 16--32 & $\approx 13$ & / & 32 & \textbf{8}\textsuperscript{\S} \\
    Extra buffer footprint & $3NS$ & $3NS$ private & $\Theta(NS)$ & $2NS + \text{rings}$ & $\mathbf{\approx 1.1NS}$ \\
    \bottomrule
  \end{tabular*}\par\vspace{3pt}
  {\footnotesize\raggedright
  ``/'' inherits the configured NCCL CTA budget. Buffer
  footprints are large-scale \dtype{bf16}/\dtype{fp32} terms, with $NS$
  denoting one unsharded \dtype{bf16} group; see
  \Cref{sec:memory-model}.\par
  \textsuperscript{$\ddagger$}Mixed-precision AG retains
  the \dtype{fp32}$\to$\dtype{bf16} cast.
  \textsuperscript{$\dagger$}Slot-Reuse copies gradients into the vacated
  AG slot. \textsuperscript{\S}\system uses short-lived sum and landing
  CTAs with tunable budgets; see
  \Cref{sec:wave-quant,sec:solver}.\par}
\end{table*}

\paragraph{Wire speed is not end-to-end efficiency.}
Well-tuned large-message collectives approach link bandwidth on
NVSwitch nodes with rail-aligned RoCEv2 or InfiniBand NICs, yet FSDP2
pays for more than wire time. Its model view is per-parameter: $P$
independently shaped tensors, each with a persistent dimension-0
\code{DTensor} shard~\cite{ansel2024pytorch2,paszke2019pytorch}. Its
collective path is rank-major and flat: NCCL sees contiguous buffers
ordered by rank, and our target configurations reduce gradients in
\dtype{fp32}. Translating between the two layouts and sharing SMs and
HBM with the model cost three measurable forms of overhead
(\Cref{fig:boundary-tax}, \Cref{sec:principles}):
\begin{enumerate}
  \item \textbf{Layout conversion and packing.} Stock FSDP2 copies each
  rank-major all-gather output into per-parameter storage and packs
  \dtype{bf16} gradients into a transient full-group \dtype{fp32}
  buffer before each reduce-scatter, so each group's parameters and
  gradients briefly exist twice, and the copies take 0.33~s of each
  Llama-3.1 70B step at 8K tokens/GPU.
  \item \textbf{SM wave quantization.} Small-$M$ GEMMs in MoE layers and
  narrow projections often finish in one wave of thread-block clusters.
  A reduce-scatter that stays resident for the whole transfer pushes
  part of that wave into a second one: NCCL's default
  reduce-scatter slows such a GEMM by $69\%$, and its
  symmetric-memory variant by $11\%$ at the median.
  \item \textbf{HBM direction contention.} Per byte served, remote
  service reads and switch-reduction reads slow co-running
  memory-bound kernels $2.2$--$4\times$ more than inbound copy-engine
  writes, depending on the clock regime (\Cref{sec:direction-tax}).
\end{enumerate}

This overhead scales with sharded parameter bytes, whereas useful
compute scales with activated parameters times tokens per GPU:
\begin{equation}
  \text{Recoverable FSDP Overhead Fraction} \;\propto\;
  \frac{\Psi_{\mathrm{total}}}{\Psi_{\mathrm{active}} \cdot T_{\mathrm{tok/gpu}}},
  \label{eq:tax-scaling}
\end{equation}
where $\Psi_{\mathrm{total}}$ is the sharded parameter volume,
$\Psi_{\mathrm{active}}$ is the activated parameter count per token, and
$T_{\mathrm{tok/gpu}}$ is the per-GPU token budget per step. The overhead
therefore grows under strong scaling, sequence parallelism, and sparse
MoE routing ($\Psi_{\mathrm{total}} \gg \Psi_{\mathrm{active}}$). On
dense Llama-3.1 70B, \system's absolute saving over Stock FSDP2 stays at
0.51--0.54~s from 16K to 32K tokens/GPU, so its relative reduction falls
from 6.1\% to 3.1\% (\Cref{fig:eval-e2e}a).

\paragraph{Why existing approaches fall short.}
Existing systems fix one side of the problem or change FSDP2's storage
contract (\Cref{tab:design-space}). PyTorch's symmetric-memory FSDP2
path~\cite{nccl231,pytorchsymmmem2025} removes all-gather transport SMs
but keeps rank-major outputs, full-group gradient packing, and 13
reduce-scatter CTAs. AMD MORI~\cite{mori2025} lands all-gather data
directly within one node; PACE~\cite{pace2025} extends direct placement
to reduce-scatter but stages and converts its multi-tensor input in a
32-CTA SM ring. Megatron-FSDP~\cite{megatronfsdp2025} and
veScale-FSDP~\cite{vescalefsdp2025} avoid packing by replacing
per-parameter storage with persistent flat windows. Among the systems
considered, none consumes per-parameter tensors directly across nodes, keeps
\dtype{fp32} reduction, and moves transport off SMs while leaving
FSDP2's storage intact.

\paragraph{Our approach and contributions.}
\textbf{Layout conversion in FSDP2 is a collective problem, not a copy
problem.} When the collective itself reads every tensor where the model
stores it and delivers every result to its final address---with copy
engines, RDMA, and NVSwitch moving the bytes and SMs performing only
short-lived \dtype{fp32} accumulation---the layout copies, the packing
buffers, and most of the SM interference disappear together. \system
builds such collectives on public NCCL~2.31+ APIs
(\Cref{fig:arch-overview}). They are
zero-copy in a precise sense: FSDP2's layout copies (all-gather copy-in
and copy-out, and reduce-scatter packing) and their full-group buffers
are gone; only required dtype casts, bounded multi-node staging, and an
optional memory-saving gradient copy remain. The design also frees
memory: all-gather slots double as parameter storage and, optionally, as
reduce-scatter input. This paper makes three contributions:
\begin{itemize}
  \item \textbf{Zero-copy collectives
  (\Cref{sec:principles,sec:ag,sec:rs}):} \system-AG and \system-RS
  consume FSDP2's per-parameter tensors without packing. \system-AG uses
  no SMs: copy engines multicast shards through NVSwitch, and paced
  recursive-doubling RDMA carries them across nodes. \system-RS reduces
  \dtype{bf16} gradients into \dtype{fp32} shards, a mixed-precision mode
  that NCCL lacks. Copy engines push slices to their owners, short-lived
  TMA CTAs sum them into \dtype{fp32} node partials, and Latin-square
  placement lets the destination NVSwitch sum up to $L$ nodes' partials
  in one reduction.
  \item \textbf{Pipelining, memory, and runtime
  (\Cref{sec:pipeline,sec:runtime}):} A two-phase pipeline separates
  gradient reads from network progress. In-Place mode reads gradients
  where autograd leaves them and runs the next all-gather early;
  Slot-Reuse mode copies them into a vacated all-gather slot and frees
  their storage. For 64 nodes, the analysis in \Cref{sec:memory-model}
  bounds the live bytes during a layer's backward at $3.72NS$
  (In-Place) and $2.29NS$ (Slot-Reuse), versus FSDP2's $4NS$; \system's
  measured peak memory (\Cref{sec:eval-e2e}) is up to 4.7~GiB below
  FSDP2+SymmMem's and, with Slot-Reuse, below Stock FSDP2's on Wan2.2.
  A topology-scoped rail communicator keeps connection, memory,
  submission, and SM costs bounded at scale.
  \item \textbf{Evaluation (\Cref{sec:eval}):} On 64 GPUs, \system is
  the fastest system at every measured point across Llama-3.1 70B,
  Wan2.2, and Qwen3-30B-A3B, reducing step time by up to 27.1\%
  relative to Stock FSDP2 and 6.0\% relative to
  FSDP2+SymmMem. Training-shape replays attribute the gains to zero-SM
  all-gather and $12.5$--$14.5\times$ less reduce-scatter SM time.
\end{itemize}

\begin{figure*}[t]
  \centering
\begin{tikzpicture}[td, x=1pt, y=1pt]
\def\tdTop#1{\pgfmathsetmacro\tdy{-28-24*#1}}
\def\tdScaffold#1#2{%
  \begin{scope}[shift={(#1,0)}]
  \foreach \tdx/\tdlab in {0/node 0, 134/node 1}{%
    \draw[draw=tdMUTE!75, line width=0.55pt, rounded corners=2pt] (\tdx,-13) rectangle ++(104,-112);
    \node[font=\tdsmall, text=tdMUTE, anchor=north west, inner sep=1.5pt] at (\tdx+1,-13) {\tdlab};}
  \foreach \tdx in {3,227}{%
    \filldraw[draw=tdNVS, fill=tdNVSf, line width=0.45pt, rounded corners=1pt] (\tdx,-26) rectangle ++(8,-94);
    \node[font=\tdtiny, text=tdNVS, rotate=90] at (\tdx+4,-73) {NVSwitch};}
  \foreach \tdj in {0,...,3}{%
    \tdTop{\tdj}%
    \foreach \tdx in {25,161}{\draw[tdbox, fill=white] (\tdx,\tdy) rectangle ++(52,-18);}
    \foreach \tdx in {83,143}{%
      \filldraw[draw=tdNET, fill=tdNETf, line width=0.4pt] (\tdx,\tdy-5) rectangle ++(12,-8);}
    \ifnum\tdj=#2\else
      \draw[draw=tdMUTE!60, line width=0.45pt, dash pattern=on 1.5pt off 1.2pt] (95,\tdy-9) -- (143,\tdy-9);
    \fi
    \node[font=\tdsmall, anchor=west, inner sep=0pt] at (27.5,\tdy-4.4) {GPU \tdj};
    \node[font=\tdsmall, anchor=west, inner sep=0pt] at (163.5,\tdy-4.4) {GPU \tdj};}
  \end{scope}}
\def\tdRail#1{%
  \tdTop{#1}%
  \draw[draw=tdNET, line width=0.7pt] (77,\tdy-9) -- (83,\tdy-9);
  \draw[tdnet] (95,\tdy-9) -- (143,\tdy-9);
  \draw[draw=tdNET, line width=0.7pt, ->] (155,\tdy-9) -- (160.5,\tdy-9);}
\def\tdStore#1#2{
  \foreach \tdk in {0,...,15}{%
    \ifnum\tdk=1
      \filldraw[tdtile, fill=tdID0] (#1+2.6*\tdk,#2-9.5) rectangle ++(2.6,-5.5);
    \else
      \filldraw[draw=tdRULE, line width=0.25pt, fill=white] (#1+2.6*\tdk,#2-9.5) rectangle ++(2.6,-5.5);
    \fi}}

\node[font=\tdbold, anchor=base west, inner sep=0pt] at (0,-6) {(a) All-gather};
\tdScaffold{0}{1}
\node[font=\tdtiny, text=tdNET, anchor=south, inner sep=1pt] at (89,-33) {NIC};
\foreach \tdj in {0,...,3}{\tdTop{\tdj}\tdStore{29}{\tdy}\tdStore{165}{\tdy}}
\tdTop{1}
\draw[tdce, ->] (25,\tdy-6) -- (11.5,\tdy-6);
\foreach \tdj in {0,...,3}{\tdTop{\tdj}\draw[tdnvs, ->] (11,\tdy-13) -- (24.5,\tdy-13);}
\tdRail{1}
\tdTop{1}
\draw[tdce, ->] (213,\tdy-6) -- (226.5,\tdy-6);
\foreach \tdj in {0,...,3}{\tdTop{\tdj}\draw[tdnvs, ->] (227,\tdy-13) -- (213.5,\tdy-13);}
\tdTop{1}
\node[tdstep] at (18,\tdy+0.2) {1};
\node[tdstep] at (119,\tdy-2.5) {2};
\node[tdstep] at (220,\tdy+0.2) {3};
\node[font=\tdtiny, text=tdNET, fill=white, anchor=north, inner sep=1pt] at (119,\tdy-14) {RDMA, $\lceil\log_2 M\rceil$ rounds};

\begin{scope}[shift={(260,0)}]
\node[font=\tdbold, anchor=base west, inner sep=0pt] at (0,-6) {(b) Reduce-scatter};
\end{scope}
\tdScaffold{260}{2}
\begin{scope}[shift={(260,0)}]
  \foreach \tdj in {0,1,3}{%
    \tdTop{\tdj}%
    \filldraw[tdtile, fill=tdCMPf] (66,\tdy-9.5) rectangle ++(7,-5.5);
    \draw[tdce, ->] (25,\tdy-6) -- (11.5,\tdy-6);}
  \tdTop{0}
  \node[font=\tdtiny, text=tdMUTE, anchor=south, inner sep=0.8pt] at (69.5,\tdy-9.5) {bf16};
  \tdTop{2}
  \draw[tdce, ->] (11,\tdy-6) -- (24.5,\tdy-6);
  \filldraw[tdbox, draw=tdSMd, fill=tdSM!70] (50,\tdy-8.5) rectangle ++(24,-8);
  \node[font=\tdtiny, inner sep=0pt] at (62,\tdy-12.5) {TMA $\Sigma$};
  \node[tdstep] at (18,\tdy+3.5) {4};
  \node[tdstep] at (43,\tdy-12.5) {5};
  \tdRail{2}
  \tdTop{2}
  \node[tdstep] at (119,\tdy-2.5) {6};
  \node[font=\tdtiny, text=tdNET, fill=white, anchor=north, inner sep=1pt] at (119,\tdy-14) {fp32 partial};
  \filldraw[tdtile, fill=tdID0] (199,\tdy-9.5) rectangle ++(10,-5.5);
  \tdTop{3}
  \filldraw[tdtile, fill=tdID1] (199,\tdy-9.5) rectangle ++(10,-5.5);
  \foreach \tdj in {2,3}{\tdTop{\tdj}\draw[tdnvs, ->] (213,\tdy-6) -- (226.5,\tdy-6);}
  \foreach \tdj in {0,...,3}{%
    \tdTop{\tdj}%
    \draw[tdnvs, ->] (227,\tdy-13) -- (213.5,\tdy-13);
    \filldraw[draw=tdSMd, fill=tdSM!70, line width=0.35pt] (185,\tdy-9.5) rectangle ++(10,-5.5);
    \node[font=\tdtiny, inner sep=0pt] at (190,\tdy-12.25) {$\Sigma$};}
  \tdTop{2}
  \node[tdstep] at (220,\tdy+3.5) {7};
\end{scope}

\begin{scope}[shift={(0,-136)}, font=\tdsmall]
  \draw[tdce, ->] (0,0) -- (12,0); \node[anchor=west, inner sep=1.5pt] at (12,0) {copy-engine write};
  \draw[tdnvs, ->] (82,0) -- (94,0); \node[anchor=west, inner sep=1.5pt] at (94,0) {NVSwitch multicast or $\Sigma$};
  \draw[tdnet, ->] (196,0) -- (208,0); \node[anchor=west, inner sep=1.5pt] at (208,0) {RDMA on one rail};
  \filldraw[draw=tdSMd, fill=tdSM!70, line width=0.35pt] (282,-2.75) rectangle ++(8,5.5);
  \node[anchor=west, inner sep=1.5pt] at (290,0) {SM work, few CTAs};
  \filldraw[tdtile, fill=tdID0] (366,-2.75) rectangle ++(8,5.5);
  \node[anchor=west, inner sep=1.5pt] at (374,0) {data from node 0};
  \filldraw[tdtile, fill=tdID1] (436,-2.75) rectangle ++(8,5.5);
  \node[anchor=west, inner sep=1.5pt] at (444,0) {from node 1};
\end{scope}
\end{tikzpicture}
  \caption{\system's collective paths (two nodes; four of eight GPUs
  shown). (a) All-gather: local multicast~\tdref{1}, rail
  RDMA~\tdref{2}, destination multicast~\tdref{3}. (b) Reduce-scatter:
  copy-engine staging~\tdref{4}, TMA \dtype{fp32} sums~\tdref{5}, rail
  RDMA~\tdref{6}, NVSwitch reduction~\tdref{7}.}
  \Description{Two panels, each with two nodes of four GPUs, an NVSwitch
  per node, and rail NICs, showing numbered all-gather and reduce-scatter
  steps.}
  \label{fig:arch-overview}
\end{figure*}

\section{Motivation and Design Principles}
\label{sec:principles}

\Cref{tab:notation} summarizes the notation used throughout the paper.

\begin{table}[t]
  \centering
  \caption{Notation.}
  \label{tab:notation}
  \small
  \setlength{\tabcolsep}{4pt}
  \begin{tabular*}{\columnwidth}{@{\extracolsep{\fill}}ll@{}}
    \toprule
    \textbf{Symbol} & \textbf{Meaning} \\
    \midrule
    $M$, $L$, $N=ML$ & nodes, GPUs per NVSwitch domain, ranks \\
    $r=nL+l$ & node-major rank of local GPU $l$ on node $n$ \\
    $P$, $p$ & parameters in an FSDP2 group, parameter index \\
    $R_p$, $K_p$ & padded per-rank rows and row length of $p$ \\
    $s_p$, $o_p$ & per-rank shard bytes and prefix offset of $p$ \\
    $S$, $NS$ & per-rank packed shard bytes; unsharded group bytes \\
    $S_{\mathrm{fp32}}$ & per-rank \dtype{fp32} reduce-scatter output, $2S$ \\
    $B_{\mathrm{pack}}$ & FSDP2's packed \dtype{fp32} reduce-scatter input \\
    $k$ & maximum concurrent rail transfers per round \\
    $W_{\mathrm{wave}}$, $W$ & contributors per wave, number of waves \\
    $n_{\mathrm{out}}$ & outbox blocks per bank \\
    $n_{\mathrm{stg}}$ & local \dtype{bf16} staging slots \\
    $n_{\mathrm{par}}$ & inbox buffer generations (1 or 2) \\
    $n_{\mathrm{part}}$ & sub-slice passes per reduce-scatter \\
    $n_{\mathrm{ctx}}$ & RMA contexts per communicator \\
    \bottomrule
  \end{tabular*}
\end{table}

\subsection{The Layout Mismatch in FSDP2}
\label{sec:anatomy}

Consider a cluster of $M$ nodes with $L$ GPUs per NVSwitch domain and
$N = ML$ total ranks. We use node-major rank $r = nL + l$ for
$n \in [0, M)$ and $l \in [0, L)$. An FSDP2 parameter group
(\code{FSDPParamGroup}) manages $P$ parameters sharded along dimension~0.
Logically reshape parameter $p \in [0, P)$ as $(D_p, K_p)$, where $D_p$
is its dimension-0 extent and $K_p$ is the product of its trailing
dimensions. Its padded per-rank row count
is $R_p = \lceil D_p / N \rceil$, shard byte size is
$s_p = R_p K_p b_{\mathrm{ag}}$ (where $b_{\mathrm{ag}} \in \{2, 4\}$ is
the all-gather item size), prefix byte offset is
$o_p = \sum_{i < p} s_i$, and total packed shard size is
$S = \sum_{p=0}^{P-1} s_p$.

\paragraph{The forward/recompute all-gather boundary.}
Stock FSDP2's \code{foreach\_\allowbreak all\_\allowbreak gather} packs
(and, under mixed precision, casts to \dtype{bf16}) the rank's $P$
shards into an $S$-byte send buffer. The rank-major all-gather output
places rank $r$'s shard of parameter $p$ at byte offset $rS + o_p$, but
compute kernels need each unsharded parameter as a contiguous
$(D_p, K_p)$ view, so that slice must land at the
\emph{parameter-major landing offset}
\begin{equation}
  \mathrm{Landing}(p, r) \;=\; o_p N + r s_p.
  \label{eq:landing}
\end{equation}
FSDP2 bridges the two with
\code{split\_\allowbreak with\_\allowbreak sizes\_\allowbreak copy} on
the compute stream right before the layer runs: $2NS$ bytes of HBM
traffic on the critical path, with both $NS$-byte buffers live.

\paragraph{The backward reduce-scatter boundary.}
After a layer's backward pass, autograd leaves $P$ unsharded gradients
$G_p^{(u)}$ (typically \dtype{bf16}) on each source rank $u \in [0, N)$.
Let $b_{\mathrm{grad}} \in \{2, 4\}$ be the gradient item size and
$S_{\mathrm{grad}} = b_{\mathrm{grad}}\sum_{p}R_pK_p$ the packed
per-rank gradient shard size. Destination rank $r$ owns rows
$\mathcal{I}_{p, r} = [r R_p, (r+1) R_p)$ of each $G_p^{(u)}$ (rows past
$D_p$ are zero). Because \code{ncclReduceScatter} reduces one rank-major
buffer in its own datatype and our target configurations reduce in
\dtype{fp32}, stock FSDP2 calls \code{torch.\_chunk\_cat} on the
compute stream to slice,
zero-pad, upcast, and pack all gradients into a transient \dtype{fp32}
buffer of
\begin{equation}
  B_{\mathrm{pack}} \;=\; 4 \sum_{p=0}^{P-1} N R_p K_p
  \;=\; \frac{4}{b_{\mathrm{grad}}} N S_{\mathrm{grad}}
  \label{eq:pack-bytes}
\end{equation}
bytes ($2NS_{\mathrm{grad}}$ for \dtype{bf16} gradients) before the
reduce-scatter can begin. Both copy kernels also wait on the compute
stream for a host-to-device (H2D) copy of their address and size tables.
This metadata copy is serialized with activation-prefetch H2D transfers
and can therefore wait behind a much larger transfer, resulting in
head-of-line blocking.

The two boundaries are inverse layout transformations that leave the
mathematical collective unchanged. \system removes both by making
placement part of the collective address map.

\subsection{SM Wave Quantization}
\label{sec:wave-quant}

FSDP inserts collectives around every layer and overlaps them with
neighboring compute on side streams, which provide concurrency but not
isolation: SM-based communication kernels compete with model kernels
for SMs, registers, and shared memory. High-throughput CUTLASS/CuTe
tensor-memory-accelerator (TMA) GEMMs use over 200~KB of shared memory
per thread block (CTA), so at most one CTA fits
on each of the $N_{\mathrm{SM}} = 148$ SMs.

These GEMMs also launch CTAs as thread-block clusters, which the
hardware co-schedules within one GPU processing cluster
(GPC)~\cite{nvidia2025cudaguide}. A GPU therefore offers only
$n_{\mathrm{slot}}$ slots for clusters of a given size and footprint.
With $n_{\mathrm{gemm}}$ GEMM clusters and $n_{\mathrm{blk}}$ slots
blocked by resident communication CTAs, a first-order model predicts
\begin{equation}
  T_{\mathrm{gemm}} \;\approx\;
  \left\lceil \frac{n_{\mathrm{gemm}}}{n_{\mathrm{slot}} - n_{\mathrm{blk}}} \right\rceil T_{\mathrm{wave}},
  \label{eq:wave-quant}
\end{equation}
where $T_{\mathrm{wave}}$ is the duration of one isolated
wave~\cite{nvidia2023matmul}. A clustered communication kernel blocks
at least one slot per cluster; single CTAs block none when they land
outside the GEMM's slots and up to one each when they spread across
GPCs.

Consider a small-$M$ GEMM ($256 \times 8192 \times 8192$), as in MoE
experts or narrow projections at small token counts. Its 128 CTAs form
$n_{\mathrm{gemm}} = 32$ four-CTA clusters, and each CTA's 213~KB of
shared memory and 255 registers per thread fill an SM. The GPU holds
$n_{\mathrm{slot}} = 33$ such clusters, so the GEMM runs in one wave
with one spare slot and 16 SMs outside any slot
(\Cref{fig:wave-quant}a). NCCL~2.31 launches four-CTA clusters by
default; its 32~MiB reduce-scatter on 8 nodes runs 16 CTAs, whose four
clusters take the spare slot and three GEMM slots. The 12 displaced GEMM
CTAs run as a second wave (\Cref{fig:wave-quant}b), slowing the GEMM by
$69\%$. Capping NCCL at 8 CTAs still blocks two slots ($67.8\%$).
With cluster size 1, the eight CTAs fit outside the GEMM's slots
($6.3\%$; \Cref{fig:occupancy}), but in the same runs the reduce-scatter
takes 13.0~ms, about twice as long as \system's.

We verify this accounting with an SM-id probe: a kernel with the
GEMM's cluster shape and per-CTA footprint whose CTAs record their SM
id and entry and exit timestamps. The measured occupancy is 33
clusters. During a 32~MiB reduce-scatter, default NCCL's four
clusters push exactly the predicted 12 probe CTAs into a second wave
in 64\% of the probe windows (Stock FSDP2: 65\%); capping NCCL at
8~CTAs pushes 4, and a 4-CTA cap and both \system configurations push
none (\Cref{tab:smid}).

\begin{figure}[t]
  \centering
\begin{tikzpicture}[td, x=1pt, y=1pt]
\def\tdc{3.5}\def\tdp{8}\def\tdTop{-17}\def\tdRow{-49.2}
\def\tdCell#1#2#3{%
  \coordinate (tdq) at (#1+0.3,#2-0.3);
  \ifcase#3
    \draw[draw=tdRULE, line width=0.3pt] (tdq) rectangle ++(\tdc-0.6,-\tdc+0.6);
  \or
    \fill[tdCMP] (tdq) rectangle ++(\tdc-0.6,-\tdc+0.6);
  \or
    \fill[tdSM] (tdq) rectangle ++(\tdc-0.6,-\tdc+0.6);
  \or
    \filldraw[draw=tdSMd, line width=0.35pt, dash pattern=on 0.6pt off 0.45pt, fill=tdSMf]
      (tdq) rectangle ++(\tdc-0.6,-\tdc+0.6);
  \fi}
\def\tdSlot#1#2#3{%
  \tdCell{#1}{#2}{#3}\tdCell{#1+\tdc}{#2}{#3}%
  \tdCell{#1}{#2-\tdc}{#3}\tdCell{#1+\tdc}{#2-\tdc}{#3}}
\def\tdGrid#1#2#3{%
  \foreach \tdk in {0,...,31}{%
    \pgfmathtruncatemacro\tdrow{\tdk/8}%
    \pgfmathtruncatemacro\tdcol{\tdk-8*\tdrow}%
    \pgfmathtruncatemacro\tdst{(\tdk>=33-#2)?2:1}%
    \tdSlot{#1+\tdcol*\tdp}{\tdTop-\tdrow*\tdp}{\tdst}}%
  \pgfmathtruncatemacro\tdsp{(#2>0)?2:((#3>0)?3:0)}%
  \tdSlot{#1}{\tdRow}{\tdsp}%
  \pgfmathtruncatemacro\tdlo{#3*3}%
  \foreach \tdk in {0,...,15}{%
    \pgfmathtruncatemacro\tdrow{\tdk/8}%
    \pgfmathtruncatemacro\tdcol{\tdk-8*\tdrow}%
    \tdCell{#1+\tdp+2+\tdcol*\tdc}{\tdRow-\tdrow*\tdc}{\tdlo}}}
\def\tdTitle#1#2#3{%
  \node[font=\tdbold, anchor=base west, inner sep=0pt] at (#1,-6) {#2};
  \node[font=\tdsmall, text=tdMUTE, anchor=base west, inner sep=0pt] at (#1,-13) {#3};}
\def\tdBar#1#2{
  \filldraw[draw=tdINK!60, line width=0.3pt, fill=tdCMP] (#1,-71) rectangle ++(26,-6);
  \ifnum#2=2
    \filldraw[draw=tdINK!60, line width=0.3pt, fill=tdCMP] (#1+26,-71) rectangle ++(26,-6);
    \fill[pattern=north east lines, pattern color=tdINK!80] (#1+26,-71) rectangle ++(26,-6);
  \fi}

\tdTitle{0}{(a) GEMM alone}{32 clusters of 4 CTAs}
\tdGrid{0}{0}{0}
\tdTitle{86}{(b) + NCCL RS}{16 CTAs in 4 clusters}
\tdGrid{86}{4}{0}
\tdTitle{172}{(c) + TIDE RS}{16 + 4 single CTAs}
\tdGrid{172}{0}{1}

\foreach \tdk in {0,1,2}{%
  \tdSlot{86+\tdk*\tdp}{-59.5}{1}
  \fill[pattern=north east lines, pattern color=tdINK!80] (86.3+\tdk*\tdp,-59.8) rectangle ++(2*\tdc-0.6,-2*\tdc+0.6);}
\node[font=\tdsmall, text=tdMUTE, anchor=west, inner sep=1pt] at (86+3*\tdp,-63) {spill: 12 CTAs};

\tdBar{0}{1}\tdBar{86}{2}\tdBar{172}{1}
\node[font=\tdsmall, anchor=west, inner sep=1.5pt] at (26,-74) {1 wave};
\node[font=\tdsmall, anchor=west, inner sep=1.5pt, text=tdTAX] at (138,-74) {2 waves};
\node[font=\tdsmall, anchor=west, inner sep=1.5pt] at (198,-74) {1 wave};
\node[font=\tdtiny, text=tdMUTE, anchor=north west, inner sep=0pt] at (0,-79.5) {GEMM latency};

\begin{scope}[shift={(0,-92)}, font=\tdsmall]
  \fill[tdCMP] (0,0) rectangle ++(\tdc,\tdc);
  \node[anchor=west, inner sep=1.5pt] at (\tdc,0.5*\tdc) {GEMM CTA};
  \fill[tdSM] (46,0) rectangle ++(\tdc,\tdc);
  \node[anchor=west, inner sep=1.5pt] at (46+\tdc,0.5*\tdc) {resident comm.\ CTA};
  \filldraw[draw=tdSMd, line width=0.35pt, dash pattern=on 0.6pt off 0.45pt, fill=tdSMf]
    (122,0) rectangle ++(\tdc,\tdc);
  \node[anchor=west, inner sep=1.5pt] at (122+\tdc,0.5*\tdc) {transient CTA};
  \draw[draw=tdRULE, line width=0.3pt] (184,0) rectangle ++(\tdc,\tdc);
  \node[anchor=west, inner sep=1.5pt] at (184+\tdc,0.5*\tdc) {idle SM};
\end{scope}
\end{tikzpicture}
  \caption{SM wave quantization for the $256\times8192\times8192$ GEMM
  (schematic; slot counts and CTA displacement measured with an SM-id
  probe, \Cref{tab:smid}). Each $2\times2$ block is a slot for one four-CTA
  cluster; the bottom row holds the spare slot and the 16 SMs outside
  any slot.}
  \Description{Three panels, each with 33 two-by-two cluster slots and
  16 single SMs, showing GEMM CTAs, resident communication clusters,
  and transient TIDE CTAs, with GEMM latency bars.}
  \label{fig:wave-quant}
\end{figure}

Placement determines which kernels displace a wave; residence time
determines how often. NCCL's ring and symmetric-memory reduce-scatter
kernels stay resident for the whole transfer, polling peer progress, so
every overlapping GEMM pays. Placement also separates the two NCCL
kernels: the symmetric-memory kernel is resident just as long---its
CTAs appear in over 90\% of probe windows---yet displaces almost no
GEMM CTA, because its 13 CTAs land on the SMs outside the GEMM's
cluster slots. Its 11\% one-wave slowdown is therefore not wave
quantization but the HBM direction cost of switch-reduction reads
(\Cref{sec:direction-tax}). A spin kernel that holds its CTAs for the
whole GEMM shows the cost of residence: with its CTAs spread across
GPCs, 8 single CTAs slow the GEMM by $66\%$ and 20 by $271\%$
(\Cref{fig:occupancy}). \system's 16 sum and 4 landing CTAs each live
0.2--0.4~ms and are resident for about a third of a reduce-scatter;
together they slow the GEMM by less than $5\%$. A grouped GEMM that fills
every SM (Qwen3-30B-A3B experts, 128 groups of 512 rows) has no spare
capacity: every NCCL and symmetric-memory reduce-scatter we measured
slows it by about $70\%$, even with cluster size 1 or a 4-CTA cap,
whereas \system's training configuration slows it by about $5\%$.

\begin{figure}[t]
  \centering
  \includegraphics[width=\columnwidth]{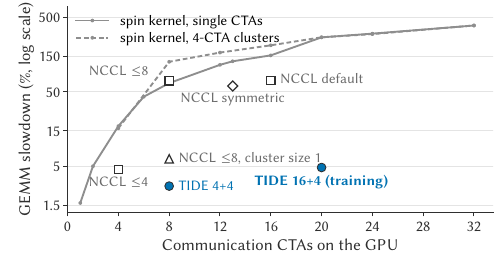}
  \caption{Slowdown of the $256\times8192\times8192$ GEMM versus
  communication CTAs at a fixed SM clock. Lines: spin kernels resident
  for the whole GEMM. Markers: 32~MiB reduce-scatters on 8 nodes (64-GPU
  median), with \system's short-lived CTAs at their peak count (labels:
  sum + landing CTAs; 16+4 is the training configuration); runs
  separate from \Cref{tab:eval-overlap}.}
  \Description{Log-scale slowdown versus CTA count. Two gray curves rise
  from 2 to 390 percent; NCCL markers at 8 to 16 CTAs sit near 60 to 71
  percent, while NCCL with 4 CTAs or cluster size 1 and both TIDE
  configurations stay below 7 percent.}
  \label{fig:occupancy}
\end{figure}

\Cref{tab:eval-overlap} and \Cref{fig:eval-cta} show the resulting
trade-off. Default NCCL
finishes the reduce-scatter in 6.75~ms but slows the GEMM by $69.0\%$;
capping it at 4 CTAs, one cluster in the spare slot, cuts the slowdown
to $2.7\%$ but stretches the reduce-scatter to 26.35~ms. \system's
training configuration matches that $2.9\%$ while finishing in
6.96~ms, faster than default NCCL: rather than capping an SM
collective, it moves transport and peer waiting off SMs and spends its
CTA budget only on local \dtype{fp32} accumulation and landing.

Stream priority cannot clear the way either: CUDA does not preempt
resident CTAs, so a high-priority kernel waits for a persistent GEMM to
retire its CTAs. GEMMs built on Cluster Launch Control~\cite{ptx9} admit
it within two tile boundaries, which suits short-lived CTAs
(\Cref{app:clc}).

\subsection{Limits of Symmetric Input Windows}
\label{sec:why-not-symm-input}

NVSwitch reduces in the switch via PTX
\code{multimem.\allowbreak ld\_reduce}~\cite{ptx9}, so one might place
unsharded gradients in a symmetric window and let each owner reduce the
$L=8$ local copies with a 2-CTA kernel. With per-parameter \dtype{bf16}
gradients and \dtype{fp32} reduction, this fails either way:
\begin{enumerate}
  \item \textbf{A \dtype{bf16} window breaks \dtype{fp32} semantics.}
  The \code{acc::f32.\allowbreak bf16x2} form of
  \code{multimem.\allowbreak ld\_reduce} accumulates in \dtype{fp32} but
  returns \dtype{bf16x2}, so every node partial is
  rounded to \dtype{bf16} before cross-node accumulation, and small
  contributions can be lost as $M$ grows.
  \item \textbf{An \dtype{fp32} window is prohibitive.}
  \code{multimem.\allowbreak ld\_reduce.\allowbreak add.v4.f32} needs
  \dtype{fp32} inputs: a persistent $2NS$-byte symmetric allocation per
  active group ($2\times$ the unsharded \dtype{bf16} group), plus either
  rewriting every gradient-producing kernel to write \dtype{fp32} into
  window offsets or an SM cast-pack pass that reintroduces the traffic,
  SMs, and buffers \system removes.
\end{enumerate}
\system instead consumes FSDP2's native \dtype{bf16} gradients, stages
bounded chunks, and accumulates once in \dtype{fp32} locally.

\subsection{The HBM Direction Tax}
\label{sec:direction-tax}

Even with an \dtype{fp32} window, traffic direction matters: per byte
served, remote service reads disrupt compute $2.2$--$4\times$ more
than inbound copy-engine writes, depending on the clock regime. In a $2 \times 2$ matrix of
initiator (SM or copy engine) and direction (pull or push), GPUs 1--7
generate traffic against GPU~0 while it runs a 256~MiB \dtype{bf16}
memory-bound victim (\Cref{tab:hbm-direction}).

\begin{table}[t]
  \centering
  \caption{HBM interference by traffic direction.}
  \label{tab:hbm-direction}
  \small
  \begin{tabular*}{\columnwidth}{@{\extracolsep{\fill}}lcccc@{}}
    \toprule
    & \multicolumn{2}{c}{\textbf{Served traffic (GB/s)}}
    & \multicolumn{2}{c}{\textbf{Victim slowdown}} \\
    \cmidrule(lr){2-3}\cmidrule(l){4-5}
    \textbf{Initiator} & \textbf{Pull} & \textbf{Push}
    & \textbf{Pull} & \textbf{Push} \\
    \midrule
    SM & 788 & 721 & $46.7\%$ & $12.1\%$ \\
    Copy engine & 838 & 798 & $43.9\%$ & $11.0\%$ \\
    \bottomrule
  \end{tabular*}\par\vspace{3pt}
  {\footnotesize\raggedright GPUs 1--7 of one NVSwitch node drive traffic
  at GPU~0, which runs a 256~MiB memory-bound victim. Slowdown: median
  victim kernel time during 10~s of traffic relative to the pre- and
  post-traffic baselines.\par}
\end{table}

Normalized by served bandwidth, pulls slow the victim by
$5.2\text{--}5.9\%$ per 100~GB/s and pushes by $1.4\text{--}1.7\%$. The
asymmetry is not a power-cap artifact: at a fixed 1200~MHz SM clock,
pulls cost $4.46$ (SM) and $3.81$ (copy engine) versus $2.04$ and
$1.21\%$ per 100~GB/s for pushes, a $2.2\text{--}3.2\times$ per-byte
gap. Across the two clock regimes, a pull therefore costs
$2.2$--$4\times$ as much per byte as a push, and we quote this range
rather than a single factor. A real 2-CTA switch-reduction collective slows the same victim by $27\%$,
\system-RS by $2.6\%$. L2-resident victims (32--64~MiB) slow by at most
$3.4\%$, so the cost is HBM contention, not L2 eviction. DRAM counters
show no extra bytes, consistent with interleaved service reads lowering
DRAM efficiency while inbound writes are coalesced before writeback.

\paragraph{Architectural synthesis.}
These findings shape \system-RS's two levels
(\Cref{fig:arch-overview}b). Within a node, copy engines push unreduced
\dtype{bf16} gradients~\tdref{4}---half the bytes, in the
low-interference direction---into small staging chunks, and 4--16 TMA
CTAs~\tdref{5} accumulate the $8 \to 1$ sum in \dtype{fp32}. Across
nodes, each node contributes one 4-byte \dtype{fp32} partial per element
instead of eight 2-byte values ($4\times$ less input); partials arrive
over rail RDMA~\tdref{6}, and
\code{multimem.\allowbreak ld\_reduce.\allowbreak add.v4.f32}~\tdref{7}
lets 1--4 landing CTAs fold up to 8 nodes per wave in \dtype{fp32}.
Switch reads thus see only the $4\times$ smaller partial stream.

\subsection{Design Requirements and Scope}
\label{sec:goals}

\system targets six requirements:
\begin{itemize}
  \item \textbf{G1 (Preserve the FSDP2 contract):} per-parameter
  dimension-0 \code{DTensor} shards, \dtype{fp32} reduction, and
  unmodified compute kernels.
  \item \textbf{G2 (Placement, not packing):} land all-gather shards in
  parameter-major storage and consume per-parameter gradients directly,
  keeping only required dtype casts.
  \item \textbf{G3 (Bound SM interference):} no SMs for all-gather, and a
  reduce-scatter CTA budget independent of node count, with no peer
  polling on SMs in the default path.
  \item \textbf{G4 (Scale across nodes):} $O(\log M)$ all-gather rounds,
  about $O(M/k)$ wave-aligned reduce-scatter rounds with $k$ concurrent
  rail transfers each, and rail communicators over $M$ rather than $N$
  ranks.
  \item \textbf{G5 (Bounded memory):} no persistent full-group
  \dtype{fp32} window, a closed-form memory model, and two time--memory
  modes.
  \item \textbf{G6 (Public NCCL APIs):} public NCCL~2.31+ APIs only;
  unsupported groups fall back to stock collectives.
\end{itemize}

\system targets PyTorch~2.13, dimension-0 sharding, node-major
deployments with up to eight GPUs per NVSwitch domain, rail-aligned RDMA
NICs, NVSwitch \code{multimem}, and compute capability $\ge9.0$ with at
least 207~KB of shared memory per block. It does
not yet support CUDA-graph capture or asynchronous unshard; groups
outside these contracts, including those with more than 256 trainable
parameters, fall back to stock collectives.

\section{TIDE-AG: Layout-Aware All-Gather}
\label{sec:ag}

\system-AG makes parameter-major storage the final collective output:
multi-node shards may transit a bounded receive stage, but FSDP2 never
sees a rank-major buffer or performs copy-out. A copy engine multicasts
each shard through NVSwitch into every local GPU~\tdref{1}, local GPU
$l$ forwards it over rail $l$~\tdref{2}, and a copy engine on each
receiving node multicasts it into place~\tdref{3}
(\Cref{fig:arch-overview}a). The final landing resides in an NCCL
symmetric arena.

\subsection{In-Place Shards and Paced Doubling}
\label{sec:ag-residence}

\paragraph{In-place shard adoption vs.\ mixed-precision cast.}
When a group's sharded dtype matches the all-gather dtype (e.g.,
\dtype{bf16}), \system reserves a persistent $S$-byte resident-shard
region for it in the symmetric arena. At the first forward pass, the
adapter copies each padded shard there once and repoints the sharded
parameter and its \code{DTensor} local view to zero-offset views over
it, so later all-gathers read shards in place without copy-in.
Mixed-precision groups (\dtype{fp32} master shards) keep FSDP2's fused
cast but write its output into the first receive-stage slot once the
previous call releases it.

\paragraph{Single-node multicast and multi-node Bruck recursive doubling.}
On a single node ($M=1$, $N=L\le 8$), the arena is bound to NVSwitch
multicast. After a stream-memory-operation entry fence, rank $r$ issues
one batched copy with $P$ descriptors to each parameter's final
multicast address, so every write reaches all $L$ GPUs once instead of
through $L-1$ peer unicasts. Cumulative completion counters prevent
consumption until every local rank finishes.

Across $M>1$ nodes, FSDP2's default prefetch depth of one uses two
alternating $NS$-byte landings, and a group needs a receive stage of
$(M-1)S$ bytes ($MS$ with a cast); without expert subgroups, both
landings share that stage. Rather than an $(M-1)$-round ring,
\system-AG runs Bruck's recursive doubling~\cite{bruck1997efficient} in
$R = \lceil \log_2 M \rceil$ rounds (\Cref{fig:ag-protocol}a). In round
$d$, node $n$ forwards its first $c_d=\min(2^d, M-2^d)$ accumulated
shard slots to rail peer $(n-2^d)\bmod M$ and receives as many from
$(n+2^d)\bmod M$; since $\sum_d c_d=M-1$, any $M$ is covered. As slots
arrive, a copy-engine stream multicasts them into the parameter-major
landing. Consecutive stage slots come from consecutive rail-peer ranks,
so their rows in parameter $p$ have a constant destination pitch
$L s_p$ and source pitch $S$ (\Cref{fig:ag-protocol}b). \system
therefore merges each run of at least four slots into one strided copy
and splits the last round into at most four groups, cutting descriptors
from $O(MP)$ to $O(P\log M)$.

\paragraph{Lead-based round pacing and per-area credits.}
Unpaced multi-peer puts can trigger RoCE receiver incast and DCQCN rate
cuts. \system therefore starts round $d\ge 1$ only after the round's
target confirms that round $d-1$ has arrived. To hide the roughly
0.12~ms signal round trip, each transfer is split into a head and a
tail (8~MiB of heads in total by default); the receiver signals once
all heads arrive, so the signal overlaps the tail. Without pacing,
training sees $9\times$ more PFC pauses and $2.6\times$ more ECN marks.
Per-area cumulative credits, emitted once every slot of a call has been
scattered, guard receive-stage reuse across calls.

\subsection{Decoupled Stream Completion}
\label{sec:ag-decouple}

Three conditions differ: a landing is free to overwrite, its new
parameters are ready to consume, and a remote receive stage may be
reused. Serializing them on FSDP2's \code{all\_gather\_stream} would
resynchronize local ranks behind the slowest sibling.

\begin{figure}[t]
  \centering
\begin{tikzpicture}[td, x=1pt, y=1pt]
\def\tdSlot#1#2#3#4#5{
  \ifnum#4<0
    \filldraw[draw=tdRULE, line width=0.3pt, fill=white] (#1,#2) rectangle ++(#3,-7);
  \else
    \filldraw[tdtile, fill=tdID#4] (#1,#2) rectangle ++(#3,-7);
    \node[font=\tdtiny, inner sep=0pt] at (#1+0.5*#3,#2-3.5) {#5};
  \fi}

\node[font=\tdbold, anchor=base west, inner sep=0pt] at (-14,-6)
  {(a) Recursive doubling on rail $l$ ($M=5$)};
\fill[tdCMPf, rounded corners=1pt] (41,-15.5) rectangle (77,-60);
\foreach \tdn in {0,...,4}{%
  \node[font=\tdsmall, text=tdMUTE, anchor=base, inner sep=0pt] at (44+36*\tdn+15,-14) {node \tdn};}
\foreach \tdt/\tdk/\tdlab in {0/1/start, 1/2/{round 0 (+1 shard)}, 2/4/{round 1 (+2 shards)},
    3/5/{round 2 (+1 shard)}}{%
  \node[font=\tdsmall, anchor=east, inner sep=1pt] at (39,-21.5-10*\tdt) {\tdlab};
  \foreach \tdn in {0,...,4}{\foreach \tdi in {0,...,4}{%
    \pgfmathtruncatemacro\tdsrc{mod(\tdn+\tdi,5)}%
    \ifnum\tdi<\tdk
      \tdSlot{44+36*\tdn+6*\tdi}{-18-10*\tdt}{6}{\tdsrc}{\tdsrc}
    \else
      \tdSlot{44+36*\tdn+6*\tdi}{-18-10*\tdt}{6}{-1}{}
    \fi}}}

\node[font=\tdbold, anchor=base west, inner sep=0pt] at (-14,-74)
  {(b) One strided multicast copy per parameter ($L=2$)};
\node[font=\tdsmall, anchor=east, inner sep=1pt] at (39,-89.5) {stage};
\node[font=\tdsmall, anchor=east, inner sep=1pt] at (39,-117.5) {$p_0$ storage};
\node[font=\tdtiny, text=tdMUTE, anchor=south, inner sep=0.8pt] at (50,-86) {$p_0$};
\node[font=\tdtiny, text=tdMUTE, anchor=south, inner sep=0.8pt] at (62,-86) {$p_1\,p_2$};
\foreach \tdi in {0,...,4}{%
  \pgfmathsetmacro\tdx{44+24*\tdi}%
  \filldraw[tdtile, fill=tdID\tdi] (\tdx,-86) rectangle ++(12,-7);
  \filldraw[tdtile, fill=tdID\tdi] (\tdx+12,-86) rectangle ++(6,-7);
  \filldraw[tdtile, fill=tdID\tdi] (\tdx+18,-86) rectangle ++(6,-7);
  \node[font=\tdtiny, inner sep=0pt] at (\tdx+6,-89.5) {\tdi};
  \draw[tdce, ->, line width=0.45pt] (\tdx+6,-93.5) -- (44+24*\tdi+18,-113.5);}
\foreach \tdr in {0,...,9}{%
  \pgfmathsetmacro\tdx{44+12*\tdr}%
  \ifodd\tdr
    \pgfmathtruncatemacro\tdsrc{(\tdr-1)/2}%
    \tdSlot{\tdx}{-114}{12}{\tdsrc}{\tdsrc}
  \else
    \fill[pattern=north east lines, pattern color=tdRULE] (\tdx,-114) rectangle ++(12,-7);
    \draw[draw=tdRULE, line width=0.3pt] (\tdx,-114) rectangle ++(12,-7);
  \fi}
\draw[decorate, decoration={brace, mirror, amplitude=2.2pt}, line width=0.4pt, tdMUTE]
  (62,-123) -- (86,-123);
\node[font=\tdtiny, text=tdMUTE, anchor=north, inner sep=1pt] at (74,-125.5) {pitch $L s_0$};
\begin{scope}[font=\tdsmall, text=tdINK]
  \node[anchor=west, inner sep=0pt] at (172,-91) {source pitch $S$};
  \node[anchor=west, inner sep=0pt] at (172,-100) {dest.\ pitch $L s_0$};
  \node[anchor=west, inner sep=0pt, text=tdCE] at (172,-109) {CE multicast to};
  \node[anchor=west, inner sep=0pt, text=tdCE] at (172,-117) {all $L$ local GPUs};
\end{scope}
\end{tikzpicture}
  \caption{\system-AG on one rail. (a) Recursive doubling gathers $M=5$
  shards in $\lceil\log_2 M\rceil=3$ rounds. (b) One strided copy per
  parameter multicasts a run of stage slots into every local GPU; hatched
  rows belong to the other local GPU.}
  \Description{A matrix of stage buffers over recursive-doubling rounds
  for five nodes, and a strided copy from stage slots to parameter
  storage.}
  \label{fig:ag-protocol}
\end{figure}

\system-AG therefore separates them across three internal streams. A
readiness stream waits until the landing is free and publishes local
readiness; the caller's stream flushes prior credits, queues the
readiness waits and the $R$ RDMA rounds, and forks scatters to a
copy-engine stream; and a completion stream joins the scatters, runs the
node-wide completion fence, and records the event that FSDP2 waits on
before consuming the parameters. The next call's credit flush waits only
for this rank's previous scatter, not for the previous node-wide fence
or the slowest local sibling.

Safety requires a credit before stage reuse, completion before
consumption, and the same call order on all ranks. Removing any one in
fault injection causes detected corruption or a timeout.

\section{TIDE-RS: Layout-Aware Reduce-Scatter}
\label{sec:rs}

\system-RS applies \system's placement principle to FSDP2 gradient
reduction (\Cref{fig:rs-architecture}). It consumes per-parameter
gradients $\{G_p^{(u)}\}$ without rank-major packing, either directly in
place or from a layout-preserving slot copy (\Cref{sec:two-modes}). It
materializes neither the transient $B_{\mathrm{pack}}$ buffer in
\Cref{eq:pack-bytes} nor a persistent full-group symmetric input window,
and produces on rank $r\in[0,N)$ the concatenated \dtype{fp32} shard
$\mathrm{out}_r = \bigoplus_{p=0}^{P-1} \mathrm{out}_{p, r}$ of total
byte size $S_{\mathrm{fp32}} = 4 \sum_{p=0}^{P-1} R_p K_p$, where
$\bigoplus$ concatenates the per-parameter shards
\begin{equation}
  \mathrm{out}_{p, r} \;\gets\;
  (\mathrm{accumulate} \;?\; \mathrm{out}_{p, r} : 0)
  \;+\; \alpha \sum_{u=0}^{N-1} G_p^{(u)}[\mathcal{I}_{p, r}].
  \label{eq:rs-semantics}
\end{equation}
The output has stock FSDP2's packed per-rank layout; $\alpha$ folds
FSDP2's divide factors and \code{AVG} or \code{PREMUL\_SUM} scaling, and
\code{accumulate} fuses gradient accumulation into the output.

\system-RS has two levels. Level~1 forms one \dtype{fp32} node partial
from the $L$ local contributions---\dtype{bf16} in our target
configuration---using copy-engine pushes and a bounded TMA sum. Level~2
sends these partials over the rails and uses Latin-square placement so
the destination NVSwitch combines up to $L$ nodes per wave.

\begin{figure}[t]
  \centering
\begin{tikzpicture}[td, x=1pt, y=1pt]
\node[font=\tdbold, anchor=base west, inner sep=0pt] at (0,-6)
  {(a) Level 1 on node A: one fp32 partial per sub-slice};
\def\tdA{-20}
\foreach \tdj/\tdg in {0/0,1/1,2/3,3/2}{%
  \pgfmathsetmacro\tdy{\tdA-12*\tdj}%
  \ifnum\tdg=2
    \draw[tdbox, line width=0.8pt, fill=white] (0,\tdy) rectangle ++(38,-9);
  \else
    \draw[tdbox, fill=white] (0,\tdy) rectangle ++(38,-9);
  \fi
  \node[font=\tdsmall, anchor=west, inner sep=0pt] at (3,\tdy-4.5) {GPU \tdg};
  \filldraw[tdtile, fill=tdCMPf] (28,\tdy-1.5) rectangle ++(7,-6);}
\node[font=\tdtiny, text=tdMUTE, anchor=south, inner sep=0.8pt] at (31.5,\tdA) {bf16};
\draw[tdbox, draw=tdINK!70, fill=tdCEf!45] (66,\tdA+1.5) rectangle (84,\tdA-34.5);
\node[font=\tdtiny, text=tdMUTE, anchor=south, inner sep=0.8pt] at (75,\tdA+1.5) {staging on GPU 2};
\foreach \tdj in {0,1,2}{%
  \pgfmathsetmacro\tdy{\tdA-12*\tdj}%
  \filldraw[tdtile, fill=tdCMPf] (71.5,\tdy-1.5) rectangle ++(7,-6);
  \draw[tdce, ->] (38.5,\tdy-4.5) -- (70.5,\tdy-4.5);}
\node[font=\tdtiny, text=tdCE, anchor=south, inner sep=0.8pt] at (54,\tdA-15.4) {CE push};
\filldraw[tdbox, draw=tdSMd, fill=tdSM!70] (93,\tdA-7) rectangle (139,\tdA-25);
\node[font=\tdsmall, inner sep=0pt] at (116,\tdA-12.5) {TMA $\Sigma$};
\node[font=\tdtiny, inner sep=0pt] at (116,\tdA-19.5) {fp32, few CTAs};
\foreach \tdj in {0,1,2}{%
  \pgfmathsetmacro\tdy{\tdA-12*\tdj-4.5}%
  \draw[draw=tdINK!55, line width=0.35pt, ->] (78.5,\tdy) -- (92.5,{\tdA-16+(\tdj-1)*3});}
\draw[draw=tdINK, line width=0.45pt, ->] (38.5,\tdA-40.5) -- (116,\tdA-40.5) -- (116,\tdA-25.5);
\node[font=\tdtiny, text=tdMUTE, anchor=north, inner sep=0.8pt] at (77,\tdA-41) {own gradient, in place};
\filldraw[tdtile, fill=tdID0] (148,\tdA-11.5) rectangle ++(16,-9);
\node[font=\tdtiny, inner sep=0pt] at (156,\tdA-16) {A};
\node[font=\tdtiny, text=tdMUTE, anchor=south, inner sep=0.8pt] at (156,\tdA-11.5) {fp32 partial};
\draw[draw=tdINK!70, line width=0.45pt, ->] (139.5,\tdA-16) -- (147.5,\tdA-16);
\node[font=\tdtiny, text=tdNET, anchor=south, inner sep=1pt] at (200,\tdA-16) {rail 2 to node D};

\node[font=\tdbold, anchor=base west, inner sep=0pt] at (0,-79)
  {(b) Level 2: Latin-square placement on node D};
\def\tdGx{78}\def\tdGy{-90}\def\tdCw{28}\def\tdCh{11}
\foreach \tds in {0,...,3}{%
  \node[font=\tdsmall, text=tdMUTE, anchor=south, inner sep=1pt]
    at (\tdGx+\tdCw*\tds+0.5*\tdCw,\tdGy) {$s_\tds$};}
\foreach \tdj in {0,...,3}{%
  \pgfmathsetmacro\tdy{\tdGy-\tdCh*\tdj}%
  \node[font=\tdsmall, anchor=east, inner sep=1pt] at (\tdGx-2,\tdy-0.5*\tdCh) {GPU \tdj};
  \draw[tdnet, ->, line width=0.55pt] (26,\tdy-0.5*\tdCh) -- (48,\tdy-0.5*\tdCh);
  \foreach \tds in {0,...,3}{%
    \pgfmathtruncatemacro\tdc{mod(\tds-\tdj+4,4)}%
    \pgfmathtruncatemacro\tdid{ifthenelse(\tdc==0,3,\tdc-1)}%
    \filldraw[tdtile, fill=tdID\tdid] (\tdGx+\tdCw*\tds,\tdy) rectangle ++(\tdCw,-\tdCh);
    \node[font=\tdsmall, inner sep=0pt] at (\tdGx+\tdCw*\tds+0.5*\tdCw,\tdy-0.5*\tdCh)
      {\ifcase\tdid A\or B\or C\or D\fi};}}
\node[font=\tdtiny, text=tdNET, anchor=south, inner sep=1pt] at (37,\tdGy) {rail $j$ from A, B, C};
\draw[draw=tdINK, line width=0.9pt] (\tdGx+3*\tdCw,\tdGy-2*\tdCh) rectangle ++(\tdCw,-\tdCh);
\draw[tdnet, ->] (164.5,\tdA-16) -- (236,\tdA-16) -- (236,\tdGy-2.5*\tdCh) -- (\tdGx+4*\tdCw+1.5,\tdGy-2.5*\tdCh);
\draw[tdnvs, line width=0.8pt, rounded corners=1.5pt]
  (\tdGx+\tdCw-1.5,\tdGy+1.5) rectangle (\tdGx+2*\tdCw+1.5,\tdGy-4*\tdCh-1.5);
\draw[tdnvs, ->] (\tdGx+1.5*\tdCw,\tdGy-4*\tdCh-1.5) -- (\tdGx+1.5*\tdCw,\tdGy-4*\tdCh-9.5);
\filldraw[tdbox, draw=tdNVS, fill=tdNVSf] (\tdGx+1.5*\tdCw-34,\tdGy-4*\tdCh-10) rectangle ++(68,-11);
\node[font=\tdsmall, inner sep=0pt] at (\tdGx+1.5*\tdCw,\tdGy-4*\tdCh-15.5) {NVSwitch $\Sigma$, fp32};
\node[font=\tdtiny, text=tdNVS, anchor=west, inner sep=1pt] at (\tdGx+1.5*\tdCw+2,\tdGy-4*\tdCh-5.5)
  {one ld\_reduce per column};
\node[font=\tdtiny, text=tdMUTE, anchor=west, inner sep=0pt] at (0,\tdGy-4*\tdCh-6) {A, B, C: $\bar c=1,2,3$};
\node[font=\tdtiny, text=tdMUTE, anchor=west, inner sep=0pt] at (0,\tdGy-4*\tdCh-13.5) {D: destination, $\bar c=0$};
\node[font=\tdtiny, text=tdMUTE, anchor=west, inner sep=0pt] at (0,\tdGy-4*\tdCh-21) {$j=(s-\bar c)\bmod L$};
\end{tikzpicture}
  \caption{Two-level \system-RS. (a) Copy engines stage local
  \dtype{bf16} slices and bounded TMA CTAs form \dtype{fp32} node
  partials. (b) Latin-square placement distributes contributors across
  destination GPUs so NVSwitch reduces each aligned sub-slice.}
  \Description{Level-1 node-local reduction into an fp32 partial, and a
  four-by-four Latin square of contributor nodes on the destination
  node's GPUs.}
  \label{fig:rs-architecture}
\end{figure}

\subsection{Level 1: Node-Local Reduction}
\label{sec:rs-level1-push}
\label{sec:rs-tma-kernel}

For each destination node, every local GPU owns one sub-slice spanning
its $L$ destination ranks. The other $L-1$ local ranks push their
\dtype{bf16} contributions into the owner's bounded staging area, while
the owner reads its own contribution directly from the per-parameter
gradient (\Cref{fig:rs-architecture}a). Staging is processed and reused in chunks; device-side counters
prevent a chunk from being overwritten before the owner finishes reading
it.

\paragraph{Coalescing the destination-rank axis.}
Flattening every destination row separately requires $(L-1)PML$ copy
descriptors per rank per call ($107{,}520$ at $(L,M,P)=(8,64,30)$). The
$L$ destination ranks on one node are consecutive, however, so their
slices within each parameter and their staging rows have constant source
and destination pitches. One strided descriptor therefore replaces $L$
flat copies for each owner, parameter, and destination node (plus a
short tail for a final padded row), cutting the count to $(L-1)PM$. On
six nodes, host issue time drops from 2.0 to 0.8~ms for $P=16$ and from
3.6 to 1.2~ms for $P=30$.

\paragraph{Keeping pace with the rail.}
At $L=8$, each 4-byte \dtype{fp32} node partial requires eight
\dtype{bf16} reads (16 bytes) and one \dtype{fp32} write (4 bytes), so
Level~1 must process $5\times$ the NIC output rate: 237.5~GB/s of
logical read-plus-write traffic for a 47.5~GB/s rail. A warp-specialized
TMA pipeline raises the kernel-only rate from 32 to 99~GB/s per CTA in
this metric; on six nodes, a reduce-scatter takes 4.81~ms with four sum
CTAs versus 4.60~ms with eight, so four CTAs keep up with a 400~Gb/s
rail. The sum CTAs retire when the read phase completes and never poll
peers, so extra CTAs add transient contention without reserving SMs for
the network transfer; the pipeline can use more of them to meet an
earlier gradient-read deadline.

\subsection{Level 2: Latin-Square NVSwitch Waves}
\label{sec:rs-level2}

SM-resident baselines keep 13--16 CTAs active for the whole transfer
(\Cref{sec:wave-quant}). \system-RS instead delegates cross-node
\dtype{fp32} summation to the destination node's NVSwitch via
\code{multimem.\allowbreak ld\_reduce.\allowbreak relaxed.\allowbreak
sys.\allowbreak global.\allowbreak add.\allowbreak v4.f32}.

\paragraph{Latin-square sub-slice placement.}
For one \code{multimem.ld\_reduce} to sum partials from several
contributor nodes, they must sit at the \emph{same symmetric offset} on
distinct local GPUs $j \in [0, L)$ of destination node $m$, while rail
alignment lets local rank $j$ on a sender reach only local rank $j$ on
the receiver.

The idea is to trade contributor identity for GPU placement. Fix one
sub-slice: if every contributor of a wave places its partial at the
same offset on a \emph{different} destination GPU, one reduction
through the multicast pointer sums them all. The only problem is the
assignment: no two contributors in a wave may land on the same GPU,
yet each sender must still use its own rail. Rotating the assignment
with the contributor index---a Latin square, shown for $L=4$ in
\Cref{fig:rs-architecture}b---satisfies both constraints at once, as
\Cref{prop:latin} proves.

\system satisfies both constraints with a Latin-square sub-slice
rotation (\Cref{fig:rs-architecture}b). Let $E_{\mathrm{stg}}$ be
the aligned per-rank staging span. For the one-pass schedule, we split it
into $L$ equal-capacity sub-slices $s\in[0,L)$, each with vector-aligned
capacity $E_{\mathrm{sub}}\ge\lceil E_{\mathrm{stg}}/L\rceil$; the final
live range may be shorter.
For destination node $m$ and contributor node $n\in[0,M)$, define the
send-order contributor index:
\begin{equation}
  c(m, n) =
  \begin{cases}
    (m - n) \bmod M, & \text{if } n \ne m \;\; (1 \le c < M), \\
    M,               & \text{if } n = m \;\; (\text{own node}).
  \end{cases}
  \label{eq:contributor-idx}
\end{equation}
Write $\bar c = c \bmod M$, so the own-node contributor has
$\bar c=0$.
In step $c \in \{1,\dots,M-1\}$, sender node $n$ targets
$t=(n+c)\bmod M$. Local rank $j\in[0,L)$ selects sub-slice
\begin{equation}
  s \;=\; (j + \bar c) \bmod L
  \label{eq:rs-block}
\end{equation}
and performs Level~1 reduction for all $L$ ranks on destination node
$t$. It stores the \dtype{fp32} partials in a
contiguous outbox block with $L$ rows of $E_{\mathrm{sub}}$ elements,
then transfers it over rail $j$ to inbox block $s$ on rank $(t,j)$.
Equivalently, from the perspective of a fixed destination node $m$ and a
fixed sub-slice $s \in [0, L)$, contributor $c$'s partial lands in inbox
block $s$ of destination local GPU:
\begin{equation}
  j(s, c) \;=\; (s - \bar c) \bmod L.
  \label{eq:latin-map}
\end{equation}

For a fixed $s$, the rotation sends each contributor in a wave to a
distinct destination GPU at the same symmetric offset, turning
contributor identity into GPU placement so that one NVSwitch reduction
folds up to $L$ cross-node partials. Rail transfers use no SM kernel;
only a short landing kernel runs after each wave arrives.

\begin{proposition}[Multi-Wave Latin-Square Collision-Freedom]
\label{prop:latin}
Let $W_{\mathrm{wave}}$ be the maximum contributors per wave and
$w\in[0,W)$ the wave index. When the $M$ contributors
$c \in \{1, \dots, M\}$ are partitioned in
increasing order into $W = \lceil M / W_{\mathrm{wave}} \rceil$ waves of
width $W_{\mathrm{wave}}$ dividing $L$ (in particular, the default
$W_{\mathrm{wave}} = L$), no two contributors in the same wave $w$ map to
the same local GPU $j(s, c)$ on destination node $m$ for any sub-slice
$s \in [0, L)$.
\end{proposition}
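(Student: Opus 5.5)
The argument reduces collision-freedom to an injectivity statement for the map $c \mapsto j(s,c) = (s-\bar c)\bmod L$ restricted to one wave. Fix the destination node $m$ and a sub-slice $s\in[0,L)$. For fixed $s$, the map $x\mapsto (s-x)\bmod L$ is a bijection of $\mathbb{Z}_L$. Two contributors $c\neq c'$ therefore collide exactly when $\bar c\equiv \bar c' \pmod L$. The task is to show that no two distinct contributors in the same wave have residues $\bar c = c\bmod M$ that agree modulo $L$.

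First, I would write wave $w$ explicitly. With contributors $1,\dots,M$ taken in increasing order, wave $w$ is the block of consecutive integers $c\in\{wW_{\mathrm{wave}}+1,\dots,\min((w+1)W_{\mathrm{wave}},M)\}$. Any two distinct members therefore satisfy $1\le |c-c'|\le W_{\mathrm{wave}}-1$.

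Second, I would relate $\bar c$ to $c$ itself. For $c\in\{1,\dots,M-1\}$ we have $\bar c=c$, and only the own-node contributor $c=M$ has $\bar c=0$. Since $M$ is the largest index, it sits in the last wave, and it is the only element whose residue is altered. For pairs not involving $c=M$, $\bar c-\bar c'=c-c'$ is a nonzero integer of absolute value less than $W_{\mathrm{wave}}\le L$. It cannot be divisible by $L$, so the two contributors land on distinct GPUs.

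The main obstacle is the pair $(c',M)$ in the last wave. Here $\bar c'-\bar M = c'-0 = c'$, with $c'$ in $[M-W_{\mathrm{wave}}+1, M-1]$, and we need $c'\not\equiv 0\pmod L$. Equivalently, we need $c'\not\equiv M \pmod L$ to fail to hold whenever $c'$ is a multiple of $L$. I would use the condition that $W_{\mathrm{wave}}$ divides $L$.

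If $W_{\mathrm{wave}}\mid M$, the last wave is $\{M-W_{\mathrm{wave}}+1,\dots,M\}$, and every $c'<M$ in it lies strictly between consecutive multiples of $W_{\mathrm{wave}}$. So $W_{\mathrm{wave}}\nmid c'$, hence $L\nmid c'$, which resolves the case.

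If $W_{\mathrm{wave}}\nmid M$, the last wave starts at $\lfloor M/W_{\mathrm{wave}}\rfloor W_{\mathrm{wave}}+1$. Every member, including $M$, lies strictly between two consecutive multiples of $W_{\mathrm{wave}}$. Thus no $c'$ in the wave is a multiple of $W_{\mathrm{wave}}$, and a fortiori none is a multiple of $L$. Hence $c'\not\equiv 0\equiv \bar M \pmod L$.

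Combining the two pair types, $j(s,\cdot)$ is injective on each wave for every $s$, which proves the claim. The default $W_{\mathrm{wave}}=L$ is the special case where the bound $|c-c'|\le L-1$ is tight.
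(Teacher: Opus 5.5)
Your proof is correct and follows the paper's argument: distinct remote indices in a wave differ by less than $W_{\mathrm{wave}}\le L$, so they are pairwise incongruent mod $L$, and the only special pair involves the own node $c=M$ ($\bar c=0$), which is handled by noting that the remote members of the last wave lie strictly inside $((W-1)W_{\mathrm{wave}}, W W_{\mathrm{wave}})$ and therefore include no multiple of $W_{\mathrm{wave}}$, hence none of $L$. Your case split on $W_{\mathrm{wave}}\mid M$ is unnecessary, since the paper covers both cases with that single interval, and you should drop the garbled sentence beginning ``Equivalently, we need $c'\not\equiv M$'': the requirement is $c'\not\equiv \bar M = 0 \pmod L$, which you then prove directly.
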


\Cref{app:latin-proof} gives the proof and exhaustive planner
validation.

\paragraph{Bounded waves and safe reuse.}
When $M>L$, contributors are split into waves of at most $L$ nodes. The
inbox has $L$ symmetric blocks per buffer generation, with one
generation for a single-wave schedule and two alternating generations
otherwise---an inbox double buffer---so its size does not grow with
$M$. Once a wave arrives, a fixed-size landing kernel
reads the aligned blocks through the multicast pointer and accumulates
their \dtype{fp32} partials into the output. Blocks written by the
previous wave on the same buffer generation but not the current one
are zeroed before
landing, and an own-node partial that would overwrite a buffer
generation still
being read goes to a spare block until that generation is free.

\paragraph{Wave-aligned network rounds.}
\system confines each network round to one wave, preserving the bounded
inbox while exposing up to $k$ concurrent rail transfers. Very large
calls may repeat the same mapping over multiple passes without changing
the buffer-generation or stale-block rules. \Cref{sec:two-phase,sec:solver} describe
pipelining and schedule selection.

\subsection{Numerical Semantics and Determinism}
\label{sec:rs-numerics}

\system-RS changes the association of the sum relative to stock FSDP2
but never narrows an intermediate partial. Level~1 adds the $L-1$ staged
contributions in local-rank order, then the own contribution, in
\dtype{fp32} (\code{add.rn.f32.bf16} rounds once on SM100+); Level~2
adds \dtype{fp32} node partials in the NVSwitch. A plan fixes the
topology, rank placement, and $(M,L,W_{\mathrm{wave}})$. Repeated
executions of a plan are bitwise identical in outputs, losses, and
parameters, as are runs that change tuning knobs that preserve operand
grouping: CTA counts, chunk sizes, passes, staging slots, host- versus
device-initiated execution, and In-Place versus Slot-Reuse. \system thus
adds no nondeterminism of its own
(\Cref{sec:eval-correctness}). Standard-normal \dtype{bf16} and
\dtype{fp32} inputs match a distributed \dtype{fp64} reference within
$10^{-5}\max(1,\lVert y_{\mathrm{ref}}\rVert_\infty)$; integer inputs
with exactly representable sums match stock FSDP2 bit for bit; and NaN
and $\pm\infty$ inputs produce the expected output classes.

\section{Pipelining and Memory Efficiency}
\label{sec:pipeline}

Low-SM collectives must still become training-level overlap with
bounded memory. \system separates gradient reads from network progress
and makes gradient residence an explicit choice: In-Place keeps
gradients in autograd storage and prefetches the next all-gather early,
while Slot-Reuse moves them into a vacated all-gather slot and frees
their storage immediately.

\subsection{Scheduling and Allocator Constraints}
\label{sec:allocator-paradox}

Let $\ell$ index layers in forward order and $e$ successive
reduce-scatter calls. When \system-RS reads autograd's gradients
$\{G_p\}$ in place, they stay live until its Level~1 pushes and sums
finish; FSDP2 keeps a collective's input until the next group's
\code{post\_backward} observes its completion event, which is therefore
the default release point. Three constraints shape the schedule.

\paragraph{Read--network coupling.}
With one shared outbox bank, call $e$ cannot start reading until call
$e-1$'s sends release the bank, and forcing the compute stream to wait
for reads puts their latency on the critical path. Local reads therefore
need independent temporary state.

\paragraph{The host-time allocator paradox.}
Asynchronous reclamation via \code{record\_stream} or a polling thread
is GPU-safe, but PyTorch's caching allocator assigns blocks when the CPU
enqueues work, one or two layers ahead of the GPU. By the time an event
reports layer $\ell$ safe to free, later layers have already claimed
fresh segments (\Cref{fig:host-time}, \Cref{app:pipeline-details}), so
reclamation must happen at an allocator-visible host callback.

\paragraph{Backward prefetch inversion.}
FSDP2's \code{post\_backward} enqueues the current reduce-scatter before
the next \code{pre\_backward} enqueues the next all-gather, which can
then wait behind a whole multi-node network phase. Whether \system can
move it ahead depends on the mode: In-Place leaves the landing slot
free, whereas Slot-Reuse stores gradients there until reads complete.

\begin{figure}[t]
  \centering
\begin{tikzpicture}[td, x=1pt, y=1pt]
\def\tdOp#1#2#3#4#5{
  \filldraw[#4] ({50+0.24*#1},#3) rectangle ({50+0.24*#2},#3-8.5);
  \node[font=\tdtiny, inner sep=0pt] at ({50+0.12*(#1+#2)},#3-4.25) {#5};}
\def\tdBar#1#2#3#4{
  \filldraw[#4] ({50+0.24*#1},#3) rectangle ({50+0.24*#2},#3-3.5);}
\def\tdRel#1#2{
  \draw[draw=tdINK, line width=0.8pt] ({50+0.24*#1},#2+1) -- ({50+0.24*#1},#2-8.8);}
\tikzset{
  tdcmp/.style={draw=tdINK!45, line width=0.3pt, fill=tdCMP},
  tdag/.style={draw=tdCE, line width=0.35pt, fill=tdCEf},
  tdrd/.style={draw=tdSMd, line width=0.35pt, fill=tdSMf},
  tdrn/.style={draw=tdNET, line width=0.35pt, fill=tdNETf},
  tdgr/.style={draw=tdINK!55, line width=0.3pt, fill=tdCMPf},
}
\foreach \tdy/\tdname in {-15/In-Place, -90/Slot-Reuse}{%
  \node[font=\tdbold, anchor=base west, inner sep=0pt] at (0,\tdy) {\tdname};
  \foreach \tdk/\tdlane in {0/compute, 1/AG, 2/RS Level 1, 3/RS Level 2}{%
    \node[font=\tdsmall, anchor=east, inner sep=1pt] at (46,\tdy-9.25-11*\tdk) {\tdlane};}
  \node[font=\tdsmall, anchor=east, inner sep=1pt] at (46,\tdy-52.9) {gradients};}

\def\tdT{-20}
\tdOp{0}{240}{\tdT}{tdcmp}{bwd L7}\tdOp{240}{480}{\tdT}{tdcmp}{bwd L6}\tdOp{480}{720}{\tdT}{tdcmp}{bwd L5}
\tdOp{242}{338}{\tdT-11}{tdag}{AG L5}\tdOp{482}{578}{\tdT-11}{tdag}{AG L4}
\tdOp{242}{382}{\tdT-22}{tdrd}{read L7}\tdOp{482}{622}{\tdT-22}{tdrd}{read L6}
\tdOp{338}{488}{\tdT-33}{tdrn}{net L7}\tdOp{578}{728}{\tdT-33}{tdrn}{net L6}
\tdBar{170}{480}{\tdT-44}{tdgr}\tdBar{650}{780}{\tdT-44}{tdgr}
\tdBar{410}{720}{\tdT-48.3}{tdgr}
\tdRel{480}{\tdT-44}\tdRel{720}{\tdT-44}
\node[font=\tdtiny, anchor=east, inner sep=1pt] at ({50+0.24*170},\tdT-45.75) {L7};
\node[font=\tdtiny, anchor=east, inner sep=1pt] at ({50+0.24*410},\tdT-50.05) {L6};
\draw[decorate, decoration={brace, mirror, amplitude=1.8pt}, line width=0.4pt, tdTAX]
  ({50+0.24*410},\tdT-53.3) -- ({50+0.24*480},\tdT-53.3);
\node[font=\tdtiny, text=tdTAX, anchor=north, inner sep=1pt] at ({50+0.24*445},\tdT-55.3) {two groups live};

\def\tdT{-95}
\tdOp{0}{240}{\tdT}{tdcmp}{bwd L7}\tdOp{266}{506}{\tdT}{tdcmp}{bwd L6}\tdOp{532}{772}{\tdT}{tdcmp}{bwd L5}
\tdOp{240}{266}{\tdT}{tdag}{}\tdOp{506}{532}{\tdT}{tdag}{}
\tdOp{410}{506}{\tdT-11}{tdag}{AG L5}\tdOp{676}{772}{\tdT-11}{tdag}{AG L4}
\tdOp{268}{408}{\tdT-22}{tdrd}{read L7}\tdOp{534}{674}{\tdT-22}{tdrd}{read L6}
\tdOp{278}{498}{\tdT-33}{tdrn}{net L7}\tdOp{544}{764}{\tdT-33}{tdrn}{net L6}
\foreach \tdt in {408,674}{%
  \draw[draw=tdINK, line width=0.45pt, ->] ({50+0.24*\tdt+0.6},\tdT-22) -- ({50+0.24*\tdt+0.6},\tdT-19.7);}
\tdBar{170}{266}{\tdT-44}{tdgr}\tdBar{436}{532}{\tdT-44}{tdgr}\tdBar{702}{772}{\tdT-44}{tdgr}
\tdBar{240}{408}{\tdT-48.3}{tdag}\tdBar{506}{674}{\tdT-48.3}{tdag}
\tdRel{266}{\tdT-44}\tdRel{532}{\tdT-44}
\node[font=\tdtiny, anchor=east, inner sep=1pt] at ({50+0.24*170},\tdT-45.75) {L7};
\node[font=\tdtiny, text=tdCE, anchor=north, inner sep=1pt] at ({50+0.24*324},\tdT-52.8) {in AG slot};
\node[font=\tdtiny, text=tdMUTE, anchor=north, inner sep=0.5pt] at ({50+0.24*253},\tdT-8.7) {copy};

\begin{scope}[shift={(0,-167)}, font=\tdsmall]
  \draw[draw=tdINK, line width=0.8pt] (0,3) -- (0,-3);
  \node[anchor=west, inner sep=1.5pt] at (1,0) {autograd storage released};
  \draw[draw=tdINK, line width=0.45pt, ->] (112,-3) -- (112,3);
  \node[anchor=west, inner sep=1.5pt] at (113,0) {gated by read completion};
\end{scope}
\end{tikzpicture}
  \caption{Backward scheduling in the two modes. In-Place lets the next
  all-gather run during reduce-scatter but briefly keeps two gradient
  groups live; Slot-Reuse copies gradients into a vacated all-gather
  slot and gates the slot's reuse on read completion.}
  \Description{Stream timelines of the two modes over three backward
  layers, showing compute, all-gather, local reads, network phases, and
  gradient lifetimes.}
  \label{fig:pipeline-modes}
\end{figure}

\subsection{Two-Phase Read--Network Pipelining}
\label{sec:two-phase}

For call $e$, \system splits source consumption from inter-node progress
and assigns its temporary state to bank $b=e\bmod 2$
(\Cref{fig:pipeline-modes}):
\begin{itemize}
  \item \textbf{Read phase.} As soon as backward produces $\{G_p\}$,
  Level~1 consumes them into an outbox bank and own-node spares without
  waiting for the previous call's network phase; a read-completion event
  marks when the source is no longer needed.
  \item \textbf{Network phase.} In collective order, rail rounds send
  ready outbox blocks and trigger the Level~2 wave landings
  (\Cref{sec:rs-level2}). A full outbox has a block for every remote
  output, so all reads proceed without waiting for the network; a
  partial outbox recycles blocks after their sends, overlapping the
  remaining reads with network progress.
\end{itemize}
Read completion thus controls source lifetime, per-round progress
controls block reuse, and final completion preserves collective order.

\paragraph{Safety and progress.}
The schedule enforces three happens-before relations: gradients and slot
copies stay live until read completion; an outbox block or inbox buffer
generation is reused only after its previous send or landing; and every wait
targets a producer already enqueued in the same collective order on all
ranks. With \Cref{prop:latin}, these give reclamation safety, buffer
safety, and deadlock freedom; \Cref{app:pipeline-details} details the
event protocol and fault-injection checks.

\subsection{In-Place and Slot-Reuse Modes}
\label{sec:early-prefetch}
\label{sec:two-modes}

The two modes choose where gradients reside while \system-RS consumes
them (\Cref{fig:pipeline-modes}). Both preserve the parameter-major
layout and avoid packing; they differ in gradient lifetime, all-gather
slot availability, and the number $n_{\mathrm{out}}$ of outbox blocks
per bank. Let $\mathrm{RS}_{\mathrm{net}}(\ell)$ denote layer $\ell$'s
network phase.

\paragraph{In-Place.}
\system reads $\{G_p\}$ directly. For the default one-pass plan, a full
outbox holds all $M-1$ remote destinations ($n_{\mathrm{out}}=M-1$), so
read completion does not wait
for the NIC; FSDP2 frees layer $\ell$'s gradients at the next group's
\code{post\_backward}, so two gradient groups are briefly live. Once
layer $\ell$ has resharded and released its ring slot, \system issues
the next backward all-gather ahead of $\mathrm{RS}_{\mathrm{net}}(\ell)$,
tracking forward invocation order so that repeated checkpointed or
accumulated invocations stay distinct.

\paragraph{Slot-Reuse.}
At \code{post\_backward}($\ell$), the layer has vacated an all-gather
slot with the same byte geometry as $\{G_p\}$. \system copies the
gradients into it, keeping their layout, and frees their storage at the
current host callback. The slot is now the reduce-scatter source, so its
next all-gather waits for read completion instead of early prefetch.
Since memory is already reclaimed, reads can follow network progress
through a partial outbox that covers two consecutive rounds. With $k_i
\le k$ transfers in round $i$, the per-bank depth is
\begin{equation}
  n_{\mathrm{out}} \;=\; \max_i \bigl(k_i + k_{i-1}\bigr) \;\le\; 2k,
  \qquad k_{-1}=0.
  \label{eq:partial-outbox}
\end{equation}
\subsection{Bounded Memory Analysis}
\label{sec:memory-model}

Within \system's arena, the two modes differ only in outbox depth. We
bound the per-GPU temporary RS staging $\mathrm{Area}_{\mathrm{RS}}$ for
our target configuration ($L=8$, \dtype{bf16} inputs, one-pass
pipeline), excluding the output shard, autograd gradients, and the shared
all-gather ring and receive stage. Let $S$ be one rank's \dtype{bf16}
shard size in the largest group ($S_{\mathrm{fp32}}=2S$),
$n_{\mathrm{stg}}$ the local staging slots, and
$n_{\mathrm{par}}\in\{1,2\}$ the number of inbox buffer generations. Then
\begin{equation}
  \mathrm{Area}_{\mathrm{RS}}
  \;\le\;
  \bigl(4n_{\mathrm{stg}} + 8n_{\mathrm{par}} + 2n_{\mathrm{out}} + 2\bigr)
  S_{\mathrm{fp32}}.
  \label{eq:area-general}
\end{equation}
since each \dtype{bf16} staging slot takes at most $4S_{\mathrm{fp32}}$,
each buffer generation eight blocks, the two outbox banks $2n_{\mathrm{out}}$
blocks, and each bank one own-node spare. With the default
$n_{\mathrm{stg}}=2$,
\begin{align}
  \mathrm{Area}_{\mathrm{InPlace}} &\le \bigl(2M + 8 n_{\mathrm{par}} + 8\bigr) S_{\mathrm{fp32}}
  \;=\; \left(\frac{1}{2} + \frac{2 n_{\mathrm{par}} + 2}{M}\right) NS, \label{eq:area-eff} \\
  \mathrm{Area}_{\mathrm{Slot}} &\le \bigl(4k + 8 n_{\mathrm{par}} + 10\bigr) S_{\mathrm{fp32}}
  \;=\; \frac{2k + 4 n_{\mathrm{par}} + 5}{2 M} \, NS. \label{eq:area-thr}
\end{align}
At fixed $k$, In-Place approaches $0.5NS$ as $M$ grows, whereas
Slot-Reuse decays as $O(k/M)$. At $M=64$, $n_{\mathrm{par}}=2$, and
$k=4$, the bounds are $0.594NS$ and $0.164NS$ (fixed-size chunks lower
In-Place to $0.56NS$).

During a layer's backward, FSDP2 holds $4NS$: the layer's unsharded
parameters, the prefetched all-gather output (allocated on another
stream, so the reduce-scatter never reuses its memory), and the previous
layer's \dtype{fp32} reduce-scatter input ($2NS$). \system's two ring
slots and receive stage ($(2+1/L)NS$) replace the first two, so with the
staging above and, for In-Place, the previous layer's \dtype{bf16}
gradients ($NS$; Slot-Reuse keeps them in a vacated slot), \system holds
at most $3.72NS$ under In-Place and $2.29NS$ under Slot-Reuse. \Cref{app:adapter-details} covers the adapter's storage and
expert-subgroup details.

\section{Implementation and Runtime}
\label{sec:runtime}

\system comprises about 5,100 lines of C++/CUDA, 2,300 lines of
Python, and 1,800 lines of tests. It uses only public NCCL~2.31 APIs
(symmetric windows, local peer pointers, multicast bindings, and RMA
puts and signals), so it runs on unmodified NCCL; all reported runs use
stock NCCL~2.31.2. The FSDP2 adapter replaces FSDP2's internal
all-gather, copy-out, reduce-scatter, unsharded-parameter allocation,
and post-forward (and, for In-Place, backward-prefetch) hooks after
checking their signatures; model code and compute kernels are unchanged.
Neither collective copies metadata to the GPU per call: the
reduce-scatter passes gradient pointers as kernel parameters and reads a
segment table uploaded at registration. The control plane must scale too:
\system bounds connection state with
topology-scoped communicators, host submission with a small stream pool,
and SM and memory use with plan selection.

\subsection{Topology-Scoped Communicators}
\label{sec:rail-comm}

A single symmetric-window RMA communicator scales poorly. In NCCL~2.31's
host RMA proxy, the first \code{ncclComm\allowbreak Window\allowbreak
Register} opens a send and a receive RDMA queue pair (QP) from each of
$n_{\mathrm{ctx}}$ RMA contexts to every rank and runs an $N$-rank ring
barrier after each connection: $O(n_{\mathrm{ctx}}N^2)$ barrier steps
and $2n_{\mathrm{ctx}}N$ QPs per GPU. Yet rank $r = nL + l$ only ever
sends to its $M-1$ same-rail peers, so \system splits the domains:
\begin{enumerate}
  \item \textbf{Node-local domain.} The main $N$-rank communicator has no
  RMA contexts but keeps local symmetric addressing (LSA) peer pointers
  and NVSwitch multicast.
  \item \textbf{Rail domains.} \code{ncclCommSplit} forms $L$ disjoint
  $M$-rank rail communicators with $n_{\mathrm{ctx}}$ contexts each
  (four in our evaluation).
  \item \textbf{Shared storage.} Each arena is registered in both
  domains; the main handle serves LSA and multicast, the rail handle RMA.
\end{enumerate}
Both communicators cap their unused NCCL channels at one. The split cuts
per-GPU QPs by $L\times$ and barrier work by $L^2\times$ ($8\times$ and
$64\times$ at $L=8$). On 8 nodes, with lazy connection, communicator
state takes \textbf{0.21~GiB} per GPU versus 6.95~GiB for eager 32-channel
initialization, and the rail split plus RMA connection takes
\textbf{1.3--1.7~s}.

\subsection{Bounded Host Execution}
\label{sec:gin-engine}

Long copy-descriptor chains can block host submission behind queued GPU
work. \system spreads descriptor batches over a small pool of internal
streams, with device-side slot counters preserving order, which keeps
median enqueue latency at 0.62~ms for \system-AG and 0.80--0.82~ms for
\system-RS across tested queue depths. The network phase is host-driven
by default, and in training this host work stays hidden behind backward
compute. Where NCCL's GPU-initiated networking (GIN) runs over GDAKI, an
optional device path replaces per-round host calls with a 1-CTA kernel
per wave (\Cref{app:gin-solver}); all reported results use the host
path.

\subsection{Plan Selection and Validation}
\label{sec:solver}

Before the first registration, a calibrated first-order model picks the
sum and landing CTA budgets, round width $k$, chunk size, pass count,
and outbox depth from measured NIC, local-sum, landing, and per-round
wait rates: it provisions enough local-sum throughput to feed the rail,
bounds the landing tail, and widens rounds only when that amortizes peer
waits (\Cref{app:gin-solver}). On our eight-node system, the solver
selects $k=4$ and four sum CTAs. Because overlap tightens the
gradient-read deadline, the training configuration raises the sum
budget to 16
(\Cref{sec:eval-setup}), which changes neither placement nor arithmetic.
At installation, all ranks agree on
tensor signatures, options, and capabilities; unsupported groups fall
back to stock collectives, and inconsistent global capabilities abort.
A 239-configuration test suite covers bitwise and tuning invariance,
non-finite inputs, HSDP and expert subgroups, EMA/checkpoint paths, and
41 fault-injection negative tests over 11 fault kinds
(\Cref{app:pipeline-details,app:adapter-details}).

\section{Evaluation}
\label{sec:eval}

We ask how much step time and memory \system saves in training
(\Cref{sec:eval-e2e}), where the saving comes from
(\Cref{sec:eval-attribution}), and how the collectives scale
(\Cref{sec:eval-scaling}).

\subsection{Experimental Setup}
\label{sec:eval-setup}

We use up to eight nodes, with eight GPUs per NVSwitch domain and one
RDMA rail per GPU; all training runs use 64 GPUs. We train dense
\emph{Llama-3.1 70B} at
8K--32K tokens/GPU, the MoE \emph{Qwen3-30B-A3B} at 16K--32K
tokens/GPU, and the high-noise expert of
\emph{Wan2.2-T2V-A14B}, a 14B dense diffusion transformer, at 17 frames
and local batch 1--4 (7.8K--31.2K tokens/GPU). We compare Stock FSDP2, FSDP2 with PyTorch
Symmetric Memory (FSDP2+SymmMem), and \system, using In-Place for Llama
and Qwen3 and Slot-Reuse for Wan; \system-RS runs 16 sum and 4 landing CTAs with
$k=4$. All systems share the model, precision, optimizer, activation
checkpointing, and inputs. We report median step times over 40--100
post-warm-up steps from repeated launches and pair launches by round;
memory is the whole-device peak.

\subsection{End-to-End Performance and Memory}
\label{sec:eval-e2e}

\begin{figure*}[t]
  \centering
  \includegraphics[width=\textwidth]{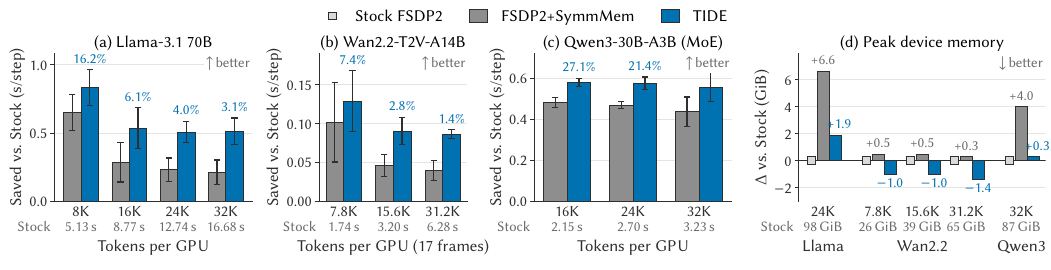}
  \caption{\system is fastest at every point and uses less memory than
  FSDP2+SymmMem (64 GPUs; Llama and Qwen3: In-Place; Wan: Slot-Reuse).
  (a--c) Time saved versus Stock; labels are the relative saving,
  Stock's step time is under each group, and error bars summarize
  run-to-run variation. (d) Peak memory versus Stock (repeats within
  0.1~GiB).}
  \Description{Grouped bars of the per-step time that FSDP2+SymmMem and
  TIDE save relative to Stock FSDP2 on Llama-3.1 70B at 8K to 32K tokens
  per GPU, on Wan2.2 at 7.8K to 31.2K tokens per GPU (17 frames), and
  on Qwen3-30B-A3B at 16K to 32K tokens per GPU, with Stock's step time
  under each group, and bars of peak-memory change relative to Stock
  annotated with Stock's absolute peak.}
  \label{fig:eval-e2e}
\end{figure*}

\system is the fastest system at every point
(\Cref{fig:eval-e2e}a--c). On Llama it cuts step time by 16.2\% at
8K tokens/GPU and 3.1\% at 32K versus Stock, and from 16K to 32K it
saves 0.51--0.54~s per step, 1.9--2.4$\times$ what FSDP2+SymmMem saves,
so the relative reduction shrinks only because the step grows. Wan shows
the same pattern on a shorter step: 7.4\%, 2.8\%, and 1.4\% versus Stock
at 7.8K, 15.6K, and 31.2K tokens/GPU, about twice what FSDP2+SymmMem
saves at the two larger sizes. For Llama and Wan, every paired-$t$ 95\%
interval against FSDP2+SymmMem excludes zero, and fixed-clock reruns of
Llama 24K and Wan at 15.6K tokens/GPU keep the same ordering.

Qwen3 shows the largest gains, as \Cref{eq:tax-scaling} predicts:
sparse MoE routing activates only a small fraction of the sharded
parameters per token, while every rank still gathers and reduces the
full parameter volume, so the recoverable overhead is a larger share
of the step. From 16K to 32K tokens/GPU, \system
cuts step time by 27.1\%, 21.4\%, and 17.3\% versus Stock and by
6.0\%, 4.8\%, and 4.3\% versus FSDP2+SymmMem
(\Cref{fig:eval-e2e}c).

\system uses less memory than FSDP2+SymmMem on all three models, and
Slot-Reuse uses less than Stock on Wan (\Cref{fig:eval-e2e}d). At Llama
24K, FSDP2+SymmMem holds the same live bytes as Stock but peaks 6.6~GiB
higher: its private pool reserves 9.4~GiB that no other tensor can
share, more than \system's whole 6.1~GiB arena, which also holds the
unsharded parameters. Stock takes the \dtype{fp32} reduce-scatter input
from the compute pool, where it shares memory with activations. In-Place
therefore peaks 4.7~GiB below FSDP2+SymmMem, adding 1.9~GiB over Stock,
mostly a communicator and an arena reserved once for the largest group
(Llama's vocabulary layer); its reduce-scatter staging does not grow
with the GPU count (\Cref{sec:eval-scaling}). On Wan, Slot-Reuse peaks
1.0--1.4~GiB below Stock and 1.5--1.7~GiB below FSDP2+SymmMem. In-Place
is only 0.3\% faster at batch~2 and uses 0.8~GiB more, so Wan uses
Slot-Reuse. On Qwen3 at 32K tokens/GPU, \system peaks 3.7~GiB below
FSDP2+SymmMem (87.3 vs.\ 91.0~GiB), within 0.3~GiB of Stock.

\subsection{Where the Gain Comes From}
\label{sec:eval-attribution}

\begin{figure*}[t]
  \centering
  \includegraphics[width=\textwidth]{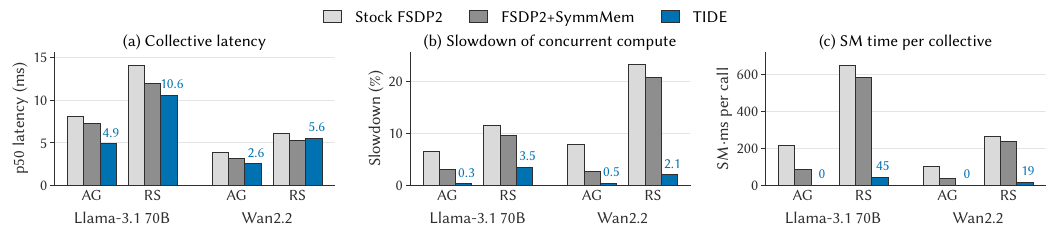}
  \caption{\system-AG barely slows concurrent compute; \system-RS cuts
  that slowdown several-fold (training shapes, fixed SM clock; lower
  is better; labels are \system's values). (a) Isolated calls.
  (b) MLP GEMM (Llama) or self-attention (Wan).}
  \Description{Grouped bars comparing Stock FSDP2, FSDP2+SymmMem, and
  TIDE on Llama-3.1 70B and Wan2.2 all-gather and reduce-scatter:
  median latency, slowdown of concurrent compute, and SM time per
  call.}
  \label{fig:eval-mechanism}
\end{figure*}

Both collectives contribute, and the gains add. Replacing one
collective at a time, in separate In-Place launches, returns about half
the combined saving: the all-gather alone is 50\% of it at Llama 24K and
55\% at Wan 15.6K, the reduce-scatter alone 58\% and 43\%, and the two
shares sum to 108\% and 98\%.

\system removes the layout copies and keeps communication off the SMs.
In a profile of one GPU during a Llama 8K step, Stock and FSDP2+SymmMem
spend 0.33 and 0.32~s per step on those copies. Communication drops
from 78~SM$\cdot$s per step under Stock and 34 under FSDP2+SymmMem to
3.8, and a communication kernel is resident for 8\% of the step instead
of 89\% and 59\%. \system's GEMMs are 1.3\% and 5.5\% shorter than under
the two baselines.

\system-AG is the fastest all-gather and barely slows concurrent
compute; \system-RS cuts that slowdown several-fold
(\Cref{fig:eval-mechanism}; \Cref{tab:eval-shape-replay} lists the
values). \system-AG uses no SMs, so concurrent compute slows by at most
0.5\%. FSDP2+SymmMem cuts all-gather SM time by 2.5--2.7$\times$ but
barely touches the reduce-scatter, whose packing and reduction still
run on SMs. \system-RS uses 12.5--14.5$\times$ less SM time
than either baseline, lowering the slowdown from 9.6--11.6\% to 3.5\%
on Llama and from 20.6--23.2\% to 2.1\% on Wan.

The advantage holds beyond the training shapes. Only \system's training
configuration finishes a 32~MiB reduce-scatter on 8 nodes faster than
Stock FSDP2 while slowing a one-wave GEMM by less than 4\%
(\Cref{app:microbench,tab:eval-overlap,fig:eval-cta}); against a large
GEMM and a memory-bound kernel, the training configuration adds 4.2\%
and 1.6\%, versus 13.6\% and 13.3\% for Stock FSDP2.

\subsection{Scaling and Resource Efficiency}
\label{sec:eval-scaling}

\begin{figure}[t]
  \centering
  \includegraphics[width=\columnwidth]{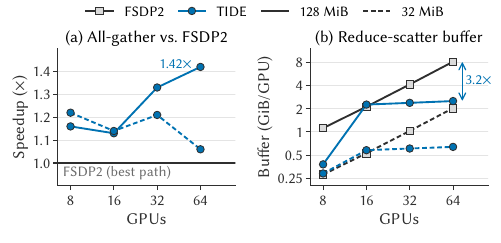}
  \caption{\system-AG stays faster from 8 to 64 GPUs; its
  reduce-scatter buffer stays flat
  (\Cref{tab:eval-ag,tab:eval-rs}). (a) Median speedup over FSDP2's
  fastest all-gather. (b) Buffer per GPU.}
  \Description{Two line plots over 8 to 64 GPUs. Left: TIDE all-gather
  speedup in median latency over the best FSDP2 path ranges from 1.06 to
  1.42 for 32 and 128 MiB shards; for 128 MiB it rises from 1.13 at 16
  GPUs to 1.42 at 64. Right, log scale: FSDP2's reduce-scatter buffer
  doubles with the GPU count, reaching 2.03 and 8.12 GiB at 64 GPUs,
  while TIDE's stays near 0.6 and 2.4 GiB from 16 GPUs on, a 3.2 times
  gap at 64 GPUs.}
  \label{fig:eval-scaling}
\end{figure}

\system-AG stays ahead of FSDP2's fastest all-gather, copy-out
included, and uses no SMs (\Cref{fig:eval-scaling}a). Across 1--8 nodes
and 32--128~MiB shards the median speedup is $1.06$--$1.42\times$.
On 128~MiB shards it grows from $1.13\times$ at 16 GPUs to $1.42\times$
at 64; 32~MiB shards stay between $1.06\times$ and $1.22\times$.

\system-RS keeps a flat buffer while FSDP2's grows with the GPU count
(\Cref{fig:eval-scaling}b). FSDP2 packs one \dtype{fp32} block per GPU,
each the size of the output shard, so a fixed output makes that buffer
linear in the GPU count. \system-RS stays within 18--21 such blocks from
two to eight nodes (0.58--0.64~GiB at 32~MiB outputs and
2.26--2.51~GiB at 128~MiB), and FSDP2's buffer is $3.2\times$ as large at 64
GPUs for both sizes ($0.9\times$ and $1.7\times$ at 16 and 32). For a fixed model,
FSDP2 packs $2NS$ at every scale; the 64-node analysis of
\Cref{sec:memory-model} bounds \system's staging at $0.59NS$ (In-Place)
and $0.16NS$ (Slot-Reuse), at most 1.2 and 0.3~GiB per GPU on Llama's
largest group against FSDP2's 3.9~GiB. On one node \system-RS is also
1.4--1.8$\times$ faster than both FSDP2 paths, and for 128~MiB outputs
on eight nodes its median latency is lower than both
(\Cref{tab:eval-rs}).

\subsection{Correctness and Robustness}
\label{sec:eval-correctness}

\system's training loss stays inside Stock FSDP2's own run-to-run
variation (\Cref{fig:convergence} in \Cref{app:convergence}).
Deterministic kernels stay off,
because they would change timing, so two Stock runs already diverge and
that spread is the reference. Over steps 500--1,000 of a 1,000-step run of Qwen3-30B-A3B, an MoE
model, on public C4, the 50-step moving averages differ from
a first Stock run by at most 0.73\% for \system and 1.12\% for a second
Stock run. Validation with EMA weight swaps, asynchronous checkpoint
save/resume, unsupported-group fallback, and peer-stall fail-stop also
pass. These runs are in addition to the operator checks in
\Cref{sec:rs-numerics}.

\section{Related Work}
\label{sec:related}

\paragraph{Sharded data parallelism and layout restructuring.}
ZeRO~\cite{rajbhandari2020zero}, ZeRO++~\cite{wang2023zeroplusplus},
MiCS~\cite{zhang2022mics}, PyTorch FSDP~\cite{zhao2023pytorchfsdp},
FSDP2 / TorchTitan~\cite{liang2025torchtitan}, and
SimpleFSDP~\cite{zhang2024simplefsdp} partition model states across
data-parallel ranks, while Megatron-LM~\cite{shoeybi2019megatron,korthikanti2023activation}
and DeepSpeed-MoE~\cite{rajbhandari2022deepspeedmoe} combine data, tensor,
sequence, and expert parallelism. FSDP2's per-parameter layout forces
the packing and copy-out of \Cref{sec:anatomy};
veScale-FSDP~\cite{vescalefsdp2025} and
Megatron-FSDP~\cite{megatronfsdp2025} remove them by restructuring model
storage around persistent flat or uneven collective buffers. \system
instead preserves FSDP2's per-parameter \code{DTensor} storage and
expresses non-contiguous placement through a byte table, analogous to
MPI derived datatypes~\cite{gropp1999using,thakur2005optimization}.

\paragraph{Symmetric memory and direct placement.}
NCCL 2.28--2.31~\cite{nccl231,ptx9} exposes symmetric windows,
copy-engine collectives, NVLink SHARP \code{multimem}, host RMA, and
device GIN.
Related symmetric-memory and direct-placement systems include PyTorch's
FSDP2 integration~\cite{pytorchsymmmem2025}, AMD
MORI~\cite{mori2025}, DeepEP~\cite{deepep2025}, and NCCL
PACE~\cite{pace2025}. PyTorch's symmetric FSDP2 path removes all-gather
transport SMs but keeps rank-major outputs and gradient packing, and its
multi-node copy-engine all-gather is unpaced, which can trigger RoCE
incast (\Cref{sec:ag-residence}). \system co-designs placement for both
collectives while preserving FSDP2's storage contract.

\paragraph{Collective libraries and GPU-initiated communication.}
NCCL~\cite{nccl231} is the de facto GPU collective library, and Hu et
al.~\cite{hu2025demystifying} dissect its protocols and algorithms.
Blink~\cite{wang2020blink}, TACCL~\cite{shah2023taccl}, and
MSCCLang~\cite{cowan2023msccl} generate or express topology-aware
collective algorithms, and MSCCL++~\cite{shah2025mscclpp} exposes
GPU-driven communication primitives. GPUDirect
Async~\cite{agostini2018gpudirect} lets the GPU trigger network
operations prepared by the CPU, and Unat et al.~\cite{unat2024landscape}
survey GPU-centric communication more broadly. \system adds no
programming model: it composes NCCL's public copy-engine,
symmetric-memory, \code{multimem}, and RMA primitives into collectives
whose output layout matches FSDP2's storage.

\paragraph{GPU sharing and interference.}
Co-running kernels contend for SMs, registers, shared memory, and memory
bandwidth. Chimera~\cite{park2015chimera} and Simultaneous
Multikernel~\cite{wang2016smk} add hardware preemption and fine-grained
sharing of each SM among kernels, while REEF~\cite{han2022reef},
Orion~\cite{strati2024orion}, and Tally~\cite{zhao2025tally} preempt,
schedule, or slice co-located DNN kernels in software to bound
interference. For concurrent computation and communication,
ConCCL~\cite{agrawal2024dma} moves all-gather and all-to-all, but not
reductions, onto GPU DMA engines, and FiCCO~\cite{ficco2025} explores
DMA-based finer-grain overlap. \system shares this goal for
FSDP2's collectives: it keeps transport off SMs, models
the remaining SM cost by cluster slots and residence time
(\Cref{sec:wave-quant}), and adds the direction-dependent HBM asymmetry
of \Cref{sec:direction-tax}.

\paragraph{Overlap and in-network reduction.}
CoCoNet~\cite{jangda2022coconet}, decomposition
pipelines~\cite{wang2023overlap}, Centauri~\cite{chen2024centauri},
T3~\cite{pati2024t3}, FLUX~\cite{chang2024flux},
TileLink~\cite{tilelink2025}, and FlashOverlap~\cite{flashoverlap2025}
fuse or schedule compute--communication overlap across general,
tensor-parallel, and MoE workloads. \system instead targets the FSDP2
layout boundary while leaving compute kernels unchanged. InfiniBand
SHARP~\cite{graham2016scalable}, SwitchML~\cite{sapio2021scaling}, and
ATP~\cite{lao2021atp} aggregate inside the inter-node network.
\system instead maps remote partials onto destination GPUs so a
node-local NVSwitch performs the final \dtype{fp32} reduction across
multi-node rail topologies, without programmable inter-node switch
support.

\section{Conclusion}
\label{sec:conclusion}

Layout conversion in FSDP2 is a collective problem, not a copy problem.
Solving it inside the collective lets \system keep FSDP2's
per-parameter storage, overlap gradient reads with the network through a
two-phase pipeline, and reuse all-gather slots as parameter storage and,
optionally, as reduce-scatter input.

On 64 GPUs training Llama-3.1 70B, Wan2.2, and Qwen3-30B-A3B, \system
is the fastest system at every measured point, cutting step time by up
to 27.1\% relative to Stock FSDP2 and 6.0\% relative to
FSDP2+SymmMem; its peak
memory is up to 4.7~GiB below FSDP2+SymmMem's and, with Slot-Reuse on
Wan2.2, below Stock FSDP2's. Its all-gather uses no SMs and is up to
$1.42\times$ faster than FSDP2's fastest all-gather plus copy-out, and
in training-shape replays its reduce-scatter cuts interference with
concurrent compute by $2.8\text{--}11\times$. Throughout, it preserves
\dtype{fp32} reduction and plan-scoped determinism.

\paragraph{Artifact availability.}
We intend to release \system's implementation, benchmark drivers, and
automated correctness gate upon publication.

\bibliographystyle{ACM-Reference-Format}
\bibliography{references}

@inproceedings{rajbhandari2020zero,
  title={{ZeRO}: Memory Optimizations Toward Training Trillion Parameter Models},
  author={Rajbhandari, Samyam and Rasley, Jeff and Ruwase, Olatunji and He, Yuxiong},
  booktitle={Proceedings of the International Conference for High Performance Computing, Networking, Storage and Analysis (SC)}, pages={1--16},
  publisher={IEEE}, address={Piscataway, NJ, USA}, year={2020}, doi={10.1109/SC41405.2020.00024}
}

@inproceedings{wang2023zeroplusplus,
  title={{ZeRO++}: Extremely Efficient Collective Communication for Large Model Training},
  author={Wang, Guanhua and Qin, Heyang and Jacobs, Sam Ade and Wu, Xiaoxia and Holmes, Connor and Yao, Zhewei and Rajbhandari, Samyam and Ruwase, Olatunji and Yan, Feng and Yang, Lei and He, Yuxiong},
  booktitle={International Conference on Learning Representations (ICLR)}, numpages={12},
  publisher={OpenReview.net}, address={Amherst, MA, USA}, year={2024}, url={https://openreview.net/forum?id=gx2BT0a9MQ}
}

@article{zhang2022mics,
  title={{MiCS}: Near-linear Scaling for Training Gigantic Model on Public Cloud},
  author={Zhang, Zhen and Zheng, Shuai and Wang, Yida and Chiu, Justin and Karypis, George and Chilimbi, Trishul and Li, Mu and Jin, Xin},
  journal={Proceedings of the VLDB Endowment}, volume={16}, number={1}, pages={37--50}, year={2022}, doi={10.14778/3561261.3561265}
}

@article{zhao2023pytorchfsdp,
  title={{PyTorch FSDP}: Experiences on Scaling Fully Sharded Data Parallel},
  author={Zhao, Yanli and Gu, Andrew and Varma, Rohan and Luo, Liang and Huang, Chien-Chin and Xu, Min and Wright, Less and Shojanazeri, Hamid and Ott, Myle and Shleifer, Sam and Desmaison, Alban and Balioglu, Can and Damania, Pritam and Nguyen, Bernard and Chauhan, Geeta and Hao, Yuchen and Mathews, Ajit and Li, Shen},
  journal={Proceedings of the VLDB Endowment}, volume={16}, number={12}, pages={3848--3860}, year={2023}, doi={10.14778/3611540.3611569}
}

@misc{zhang2024simplefsdp,
  title={{SimpleFSDP}: Simpler Fully Sharded Data Parallel with torch.compile},
  author={Zhang, Ruisi and Liu, Tianyu and Feng, Will and Gu, Andrew and Purandare, Sanket and Liang, Wanchao and Massa, Francisco},
  year={2024}, eprint={2411.00284}, archivePrefix={arXiv}, url={https://arxiv.org/abs/2411.00284}
}

@inproceedings{vescalefsdp2025,
  title={{veScale-FSDP}: Flexible and High-Performance {FSDP} at Scale},
  author={Wang, Zezhou and Li, Youjie and Lin, Zhiqi and Yang, Jiacheng and Xie, Cong and Feng, Guanyu and Zhong, Zheng and Huang, Ziyue and Zhu, Hongyu and Zhang, Zhi and Peng, Yanghua and Liu, Xin},
  booktitle={Proceedings of the 9th Conference on Machine Learning and Systems (MLSys), Industry Track}, numpages={14},
  publisher={MLSys}, address={Bellevue, WA, USA}, year={2026},
  url={https://proceedings.mlsys.org/paper_files/paper/2026/hash/d5cd70b708f726737e2ebace18c3f71b-Abstract-Conference.html}
}

@misc{megatronfsdp2025,
  title={{Megatron-FSDP}: High-Performance Fully Sharded Data Parallelism in {Megatron-Core}},
  author={{NVIDIA Corporation}}, year={2025}, note={NVIDIA Megatron-Core Documentation},
  howpublished={\url{https://docs.nvidia.com/megatron-core/}}
}

@inproceedings{paszke2019pytorch,
  title={{PyTorch}: An Imperative Style, High-Performance Deep Learning Library},
  author={Paszke, Adam and Gross, Sam and Massa, Francisco and Lerer, Adam and Bradbury, James and Chanan, Gregory and Killeen, Trevor and Lin, Zeming and Gimelshein, Natalia and Antiga, Luca and Desmaison, Alban and K{\"o}pf, Andreas and Yang, Edward and DeVito, Zachary and Raison, Martin and Tejani, Alykhan and Chilamkurthy, Sasank and Steiner, Benoit and Fang, Lu and Bai, Junjie and Chintala, Soumith},
  booktitle={Advances in Neural Information Processing Systems (NeurIPS)}, pages={8024--8035},
  publisher={Curran Associates, Inc.}, address={Red Hook, NY, USA}, year={2019}
}

@inproceedings{ansel2024pytorch2,
  title={{PyTorch 2}: Faster Machine Learning Through Dynamic Python Bytecode Transformation and Graph Compilation},
  author={Ansel, Jason and Yang, Edward and He, Horace and Gimelshein, Natalia and Jain, Animesh and Voznesensky, Michael and Bao, Bin and Bell, Peter and Berard, David and Burovski, Evgeni and Chauhan, Geeta and Chourdia, Anjali and Constable, Will and Desmaison, Alban and DeVito, Zachary and Ellison, Elias and Feng, Will and Gong, Jiong and Gschwind, Michael and Hirsh, Brian and Huang, Sherlock and Kalambarkar, Kshiteej and Kirsch, Laurent and Lazos, Michael and Lezcano, Mario and Liang, Yanbo and Liang, Jason and Lu, Yinghai and Luk, C. K. and Maher, Bert and Pan, Yunjie and Puhrsch, Christian and Reso, Matthias and Saroufim, Mark and Siraichi, Marcos Yukio and Suk, Helen and Zhang, Shunting and Suo, Michael and Tillet, Phil and Zhao, Xu and Wang, Eikan and Zhou, Keren and Zou, Richard and Wang, Xiaodong and Mathews, Ajit and Wen, William and Chanan, Gregory and Wu, Peng and Chintala, Soumith},
  booktitle={Proceedings of the 29th ACM International Conference on Architectural Support for Programming Languages and Operating Systems, Volume 2}, pages={929--947},
  publisher={Association for Computing Machinery}, address={New York, NY, USA}, year={2024}, doi={10.1145/3620665.3640366}
}

@inproceedings{liang2025torchtitan,
  title={{TorchTitan}: One-stop {PyTorch} Native Solution for Production Ready {LLM} Pretraining},
  author={Liang, Wanchao and Liu, Tianyu and Wright, Less and Constable, Will and Gu, Andrew and Huang, Chien-Chin and Zhang, Iris and Feng, Wei and Huang, Howard and Wang, Junjie and Purandare, Sanket and Nadathur, Gokul and Idreos, Stratos},
  booktitle={International Conference on Learning Representations (ICLR)}, numpages={21},
  publisher={OpenReview.net}, address={Amherst, MA, USA}, year={2025}, url={https://openreview.net/forum?id=SFN6Wm7YBI}
}

@misc{shoeybi2019megatron,
  title={{Megatron-LM}: Training Multi-Billion Parameter Language Models Using Model Parallelism},
  author={Shoeybi, Mohammad and Patwary, Mostofa and Puri, Raul and LeGresley, Patrick and Casper, Jared and Catanzaro, Bryan},
  year={2019}, eprint={1909.08053}, archivePrefix={arXiv}, url={https://arxiv.org/abs/1909.08053}
}

@inproceedings{korthikanti2023activation,
  title={Reducing Activation Recomputation in Large Transformer Models},
  author={Korthikanti, Vijay Anand and Casper, Jared and Lym, Sangkug and McAfee, Lawrence and Andersch, Michael and Shoeybi, Mohammad and Catanzaro, Bryan},
  booktitle={Proceedings of Machine Learning and Systems (MLSys)}, pages={341--353},
  publisher={MLSys}, address={Miami, FL, USA}, year={2023}
}

@inproceedings{lepikhin2020gshard,
  title={{GShard}: Scaling Giant Models with Conditional Computation and Automatic Sharding},
  author={Lepikhin, Dmitry and Lee, HyoukJoong and Xu, Yuanzhong and Chen, Dehao and Firat, Orhan and Huang, Yanping and Krikun, Maxim and Shazeer, Noam and Chen, Zhifeng},
  booktitle={International Conference on Learning Representations (ICLR)}, numpages={35},
  publisher={OpenReview.net}, address={Amherst, MA, USA}, year={2021}, url={https://openreview.net/forum?id=qrwe7XHTmYb}
}

@article{fedus2022switch,
  title={Switch Transformers: Scaling to Trillion Parameter Models with Simple and Efficient Sparsity},
  author={Fedus, William and Zoph, Barret and Shazeer, Noam},
  journal={Journal of Machine Learning Research}, volume={23}, number={120}, pages={1--39}, year={2022}
}

@inproceedings{rajbhandari2022deepspeedmoe,
  title={{DeepSpeed-MoE}: Advancing Mixture-of-Experts Inference and Training to Power Next-Generation {AI} Scale},
  author={Rajbhandari, Samyam and Li, Conglong and Yao, Zhewei and Zhang, Minjia and Aminabadi, Reza Yazdani and Awan, Ammar Ahmad and Rasley, Jeff and He, Yuxiong},
  booktitle={Proceedings of the 39th International Conference on Machine Learning}, series={Proceedings of Machine Learning Research}, volume={162}, pages={18332--18346},
  publisher={PMLR}, address={Cambridge, MA, USA}, year={2022}, url={https://proceedings.mlr.press/v162/rajbhandari22a.html}
}

@book{gropp1999using,
  title={Using {MPI}: Portable Parallel Programming with the Message-Passing Interface},
  author={Gropp, William and Lusk, Ewing and Skjellum, Anthony},
  publisher={MIT Press}, address={Cambridge, MA, USA}, edition={2nd}, year={1999}
}

@article{bruck1997efficient,
  title={Efficient Algorithms for All-to-All Communications in Multiport Message-Passing Systems},
  author={Bruck, Jehoshua and Ho, Ching-Tien and Kipnis, Shlomo and Upfal, Eli and Weathersby, Derrick},
  journal={IEEE Transactions on Parallel and Distributed Systems}, volume={8}, number={11}, pages={1143--1156}, year={1997}, doi={10.1109/71.642949}
}

@article{thakur2005optimization,
  title={Optimization of Collective Communication Operations in {MPICH}},
  author={Thakur, Rajeev and Rabenseifner, Rolf and Gropp, William},
  journal={The International Journal of High Performance Computing Applications}, volume={19}, number={1}, pages={49--66}, year={2005}, doi={10.1177/1094342005051521}
}

@inproceedings{qian2024alibaba,
  title={Alibaba {HPN}: A Data Center Network for Large Language Model Training},
  author={Qian, Kun and Xi, Yongqing and Cao, Jiamin and Gao, Jiaqi and Xu, Yichi and Guan, Yu and Fu, Binzhang and Shi, Xuemei and Zhu, Fangbo and Miao, Rui and Wang, Chao and Wang, Peng and Zhang, Pengcheng and Zeng, Xianlong and Ruan, Eddie and Yao, Zhiping and Zhai, Ennan and Cai, Dennis},
  booktitle={Proceedings of the ACM SIGCOMM 2024 Conference}, pages={691--706},
  publisher={Association for Computing Machinery}, address={New York, NY, USA}, year={2024}, doi={10.1145/3651890.3672265}
}

@inproceedings{jiang2024megascale,
  title={{MegaScale}: Scaling Large Language Model Training to More Than 10,000 {GPUs}},
  author={Jiang, Ziheng and Lin, Haibin and Zhong, Yinmin and Huang, Qi and Chen, Yangrui and Zhang, Zhi and Peng, Yanghua and Li, Xiang and Xie, Cong and Nong, Shibiao and Jia, Yulu and He, Sun and Chen, Hongmin and Bai, Zhihao and Hou, Qi and Yan, Shipeng and Zhou, Ding and Sheng, Yiyao and Jiang, Zhuo and Xu, Haohan and Wei, Haoran and Zhang, Zhang and Nie, Pengfei and Zou, Leqi and Zhao, Sida and Xiang, Liang and Liu, Zherui and Li, Zhe and Jia, Xiaoying and Ye, Jianxi and Jin, Xin and Liu, Xin},
  booktitle={21st USENIX Symposium on Networked Systems Design and Implementation (NSDI)}, pages={745--760},
  publisher={USENIX Association}, address={Berkeley, CA, USA}, year={2024}, url={https://www.usenix.org/conference/nsdi24/presentation/jiang-ziheng}
}

@inproceedings{cowan2023msccl,
  title={{MSCCLang}: Microsoft Collective Communication Language},
  author={Cowan, Meghan and Maleki, Saeed and Musuvathi, Madanlal and Saarikivi, Olli and Xiong, Yifan},
  booktitle={Proceedings of the 28th ACM International Conference on Architectural Support for Programming Languages and Operating Systems, Volume 2}, pages={502--514},
  publisher={Association for Computing Machinery}, address={New York, NY, USA}, year={2023}, doi={10.1145/3575693.3575724}
}

@inproceedings{shah2025mscclpp,
  title={{MSCCL++}: Rethinking {GPU} Communication Abstractions for {AI} Inference},
  author={Hwang, Changho and Cheng, Peng and Dathathri, Roshan and Jangda, Abhinav and Maleki, Saeed and Musuvathi, Madan and Saarikivi, Olli and Shah, Aashaka and Yang, Ziyue and Li, Binyang and Rocha, Caio and Zhou, Qinghua and Ghazimirsaeed, Mahdieh and Anantharamu, Sreevatsa and Jose, Jithin},
  booktitle={Proceedings of the 31st ACM International Conference on Architectural Support for Programming Languages and Operating Systems, Volume 2}, pages={1201--1215},
  publisher={Association for Computing Machinery}, address={New York, NY, USA}, year={2026}, doi={10.1145/3779212.3790188}
}

@inproceedings{shah2023taccl,
  title={{TACCL}: Guiding Collective Algorithm Synthesis Using Communication Sketches},
  author={Shah, Aashaka and Chidambaram, Vijay and Cowan, Meghan and Maleki, Saeed and Musuvathi, Madan and Mytkowicz, Todd and Nelson, Jacob and Saarikivi, Olli and Singh, Rachee},
  booktitle={20th USENIX Symposium on Networked Systems Design and Implementation (NSDI)}, pages={593--612},
  publisher={USENIX Association}, address={Berkeley, CA, USA}, year={2023}, url={https://www.usenix.org/conference/nsdi23/presentation/shah}
}

@inproceedings{jangda2022coconet,
  title={Breaking the Computation and Communication Abstraction Barrier in Distributed Machine Learning Workloads},
  author={Jangda, Abhinav and Huang, Jun and Liu, Guodong and Sabet, Amir Hossein Nodehi and Maleki, Saeed and Miao, Youshan and Musuvathi, Madanlal and Mytkowicz, Todd and Saarikivi, Olli},
  booktitle={Proceedings of the 27th ACM International Conference on Architectural Support for Programming Languages and Operating Systems}, pages={402--416},
  publisher={Association for Computing Machinery}, address={New York, NY, USA}, year={2022}, doi={10.1145/3503222.3507778}
}

@inproceedings{wang2023overlap,
  title={Overlap Communication with Dependent Computation via Decomposition in Large Deep Learning Models},
  author={Wang, Shibo and Wei, Jinliang and Sabne, Amit and Davis, Andy and Ilbeyi, Berkin and Hechtman, Blake and Chen, Dehao and Murthy, Karthik Srinivasa and Maggioni, Marcello and Zhang, Qiao and Kumar, Sameer and Guo, Tongfei and Xu, Yuanzhong and Zhou, Zongwei},
  booktitle={Proceedings of the 28th ACM International Conference on Architectural Support for Programming Languages and Operating Systems, Volume 1}, pages={93--106},
  publisher={Association for Computing Machinery}, address={New York, NY, USA}, year={2023}, doi={10.1145/3567955.3567959}
}

@inproceedings{chen2024centauri,
  title={Centauri: Enabling Efficient Scheduling for Communication-Computation Overlap in Large Model Training via Communication Partitioning},
  author={Chen, Chang and Li, Xiuhong and Zhu, Qianchao and Duan, Jiangfei and Sun, Peng and Zhang, Xingcheng and Yang, Chao},
  booktitle={Proceedings of the 29th ACM International Conference on Architectural Support for Programming Languages and Operating Systems, Volume 3}, pages={178--191},
  publisher={Association for Computing Machinery}, address={New York, NY, USA}, year={2024}, doi={10.1145/3620666.3651379}
}

@inproceedings{pati2024t3,
  title={{T3}: Transparent Tracking \& Triggering for Fine-grained Overlap of Compute \& Collectives},
  author={Pati, Suchita and Aga, Shaizeen and Islam, Mahzabeen and Jayasena, Nuwan and Sinclair, Matthew D.},
  booktitle={Proceedings of the 29th ACM International Conference on Architectural Support for Programming Languages and Operating Systems, Volume 2}, pages={1146--1164},
  publisher={Association for Computing Machinery}, address={New York, NY, USA}, year={2024}, doi={10.1145/3620665.3640410}
}

@misc{chang2024flux,
  title={{FLUX}: Fast Software-based Communication Overlap on {GPUs} Through Kernel Fusion},
  author={Chang, Li-Wen and Bao, Wenlei and Hou, Qi and Jiang, Chengquan and Zheng, Ningxin and Zhong, Yinmin and Zhang, Xuanrun and Song, Zuquan and Yao, Chengji and Jiang, Ziheng and Lin, Haibin and Jin, Xin and Liu, Xin},
  year={2024}, eprint={2406.06858}, archivePrefix={arXiv}, url={https://arxiv.org/abs/2406.06858}
}

@inproceedings{tilelink2025,
  title={{TileLink}: Generating Efficient Compute-Communication Overlapping Kernels Using Tile-Centric Primitives},
  author={Zheng, Size and Fang, Jin and Zheng, Xuegui and Hou, Qi and Bao, Wenlei and Zheng, Ningxin and Jiang, Ziheng and Wang, Dongyang and Ye, Jianxi and Lin, Haibin and Chang, Li-Wen and Liu, Xin},
  booktitle={Proceedings of the 8th Conference on Machine Learning and Systems (MLSys)}, numpages={16},
  publisher={MLSys}, address={Santa Clara, CA, USA}, year={2025}
}

@inproceedings{flashoverlap2025,
  title={Efficient and Adaptable Overlapping for Computation and Communication via Signaling and Reordering},
  author={Hong, Ke and Li, Xiuhong and Liu, Minxu and Mao, Qiuli and Wu, Tianqi and Huang, Zixiao and Chen, Lufang and Wang, Zhong and Zhang, Yichong and Zhu, Zhenhua and Dai, Guohao and Wang, Yu},
  booktitle={Proceedings of the 21st European Conference on Computer Systems (EuroSys)}, pages={1894--1911},
  publisher={Association for Computing Machinery}, address={New York, NY, USA}, year={2026}, doi={10.1145/3767295.3769370}
}

@inproceedings{ficco2025,
  title={{FiCCO}: Design Space Exploration of {DMA}-Based Finer-Grain Compute Communication Overlap},
  author={Pal, Shagnik and Aga, Shaizeen and Pati, Suchita and Islam, Mahzabeen and John, Lizy K.},
  booktitle={2026 IEEE International Symposium on Workload Characterization (IISWC)}, numpages={15},
  publisher={IEEE}, address={Piscataway, NJ, USA}, year={2026}, doi={10.1109/IISWC71020.2026.00030}
}

@inproceedings{graham2016scalable,
  title={Scalable Hierarchical Aggregation Protocol ({SHArP}): A Hardware Architecture for Efficient Data Reduction},
  author={Graham, Richard L. and Koushnir, Vladimir and Levi, Lion and Margolin, Alex and Ronen, Tamir and Shpiner, Alexander and Wertheim, Oded and Zahavi, Eitan and Bureddy, Devendar and Lui, Pak and Rosenstock, Hal and Shainer, Gilad and Bloch, Gil and Goldenberg, Dror and Dubman, Mike and Kotchubievsky, Sasha},
  booktitle={2016 First International Workshop on Communication Optimizations in HPC (COMHPC)}, pages={1--10},
  publisher={IEEE}, address={Piscataway, NJ, USA}, year={2016}, doi={10.1109/COMHPC.2016.006}
}

@inproceedings{sapio2021scaling,
  title={Scaling Distributed Machine Learning with In-Network Aggregation},
  author={Sapio, Amedeo and Canini, Marco and Ho, Chen-Yu and Nelson, Jacob and Kalnis, Panos and Kim, Changhoon and Krishnamurthy, Arvind and Moshref, Masoud and Ports, Dan and Richtarik, Peter},
  booktitle={18th USENIX Symposium on Networked Systems Design and Implementation (NSDI)}, pages={785--808},
  publisher={USENIX Association}, address={Berkeley, CA, USA}, year={2021}, url={https://www.usenix.org/conference/nsdi21/presentation/sapio}
}

@inproceedings{lao2021atp,
  title={{ATP}: In-Network Aggregation for Multi-Tenant Learning},
  author={Lao, ChonLam and Le, Yanfang and Mahajan, Kshiteej and Chen, Yixi and Wu, Wenfei and Akella, Aditya and Swift, Michael},
  booktitle={18th USENIX Symposium on Networked Systems Design and Implementation (NSDI)}, pages={741--761},
  publisher={USENIX Association}, address={Berkeley, CA, USA}, year={2021}, url={https://www.usenix.org/conference/nsdi21/presentation/lao}
}

@misc{deepep2025,
  title={{DeepEP}: An Efficient Expert-Parallel Communication Library},
  author={Zhao, Chenggang and Zhou, Shangyan and Zhang, Liyue and Deng, Chengqi and Xu, Zhe and Liu, Yibo and Yu, Kuai and Li, Jiashi and Zhao, Liang},
  year={2025}, note={Open-source software repository}, howpublished={\url{https://github.com/deepseek-ai/DeepEP}}
}

@misc{pace2025,
  title={{PACE}: Parallelism-Aware Collective Engine}, author={{NVIDIA Corporation}}, year={2026},
  note={NCCL contrib software}, howpublished={\url{https://github.com/NVIDIA/nccl/tree/master/contrib/pace}}
}

@misc{mori2025,
  title={[{RFC}] {SDMA} {AllGather} and Zero-Copy Output for {PyTorch} {FSDP2}},
  author={{Advanced Micro Devices, Inc.}}, year={2026}, note={PyTorch RFC \#186601},
  howpublished={\url{https://github.com/pytorch/pytorch/issues/186601}}
}

@misc{pytorchsymmmem2025,
  title={Enable Copy Engine All-Gather in {FSDP}}, author={{PyTorch Contributors}}, year={2026},
  note={PyTorch Pull Request \#176613}, howpublished={\url{https://github.com/pytorch/pytorch/pull/176613}}
}

@misc{nccl231,
  title={{NVIDIA} Collective Communications Library ({NCCL}) v2.31 Documentation},
  author={{NVIDIA Corporation}}, year={2026}, note={NVIDIA Developer Documentation},
  howpublished={\url{https://docs.nvidia.com/deeplearning/nccl/}}
}

@misc{ptx9,
  title={Parallel Thread Execution {ISA} Version 9.0}, author={{NVIDIA Corporation}}, year={2025},
  note={CUDA Toolkit Documentation}, howpublished={\url{https://docs.nvidia.com/cuda/parallel-thread-execution/}}
}

@inproceedings{park2015chimera,
  title={Chimera: Collaborative Preemption for Multitasking on a Shared {GPU}},
  author={Park, Jason Jong Kyu and Park, Yongjun and Mahlke, Scott A.},
  booktitle={Proceedings of the Twentieth International Conference on Architectural Support for Programming Languages and Operating Systems}, pages={593--606},
  publisher={Association for Computing Machinery}, address={New York, NY, USA}, year={2015}, doi={10.1145/2694344.2694346}
}

@inproceedings{wang2016smk,
  title={Simultaneous Multikernel {GPU}: Multi-tasking Throughput Processors via Fine-Grained Sharing},
  author={Wang, Zhenning and Yang, Jun and Melhem, Rami G. and Childers, Bruce R. and Zhang, Youtao and Guo, Minyi},
  booktitle={2016 IEEE International Symposium on High Performance Computer Architecture (HPCA)}, pages={358--369},
  publisher={IEEE}, address={Piscataway, NJ, USA}, year={2016}, doi={10.1109/HPCA.2016.7446078}
}

@inproceedings{han2022reef,
  title={Microsecond-scale Preemption for Concurrent {GPU}-accelerated {DNN} Inferences},
  author={Han, Mingcong and Zhang, Hanze and Chen, Rong and Chen, Haibo},
  booktitle={16th USENIX Symposium on Operating Systems Design and Implementation (OSDI)}, pages={539--558},
  publisher={USENIX Association}, address={Berkeley, CA, USA}, year={2022}, url={https://www.usenix.org/conference/osdi22/presentation/han}
}

@inproceedings{strati2024orion,
  title={Orion: Interference-aware, Fine-grained {GPU} Sharing for {ML} Applications},
  author={Strati, Foteini and Ma, Xianzhe and Klimovic, Ana},
  booktitle={Proceedings of the Nineteenth European Conference on Computer Systems (EuroSys)}, pages={1075--1092},
  publisher={Association for Computing Machinery}, address={New York, NY, USA}, year={2024}, doi={10.1145/3627703.3629578}
}

@inproceedings{zhao2025tally,
  title={Tally: Non-Intrusive Performance Isolation for Concurrent Deep Learning Workloads},
  author={Zhao, Wei and Jayarajan, Anand and Pekhimenko, Gennady},
  booktitle={Proceedings of the 30th ACM International Conference on Architectural Support for Programming Languages and Operating Systems, Volume 1}, pages={1052--1068},
  publisher={Association for Computing Machinery}, address={New York, NY, USA}, year={2025}, doi={10.1145/3669940.3707282}
}

@inproceedings{agrawal2024dma,
  title={{ConCCL}: Optimizing {ML} Concurrent Computation and Communication with {GPU DMA} Engines},
  author={Agrawal, Anirudha and Aga, Shaizeen and Pati, Suchita and Islam, Mahzabeen},
  booktitle={2025 IEEE International Symposium on Performance Analysis of Systems and Software (ISPASS)}, pages={1--11},
  publisher={IEEE}, address={Piscataway, NJ, USA}, year={2025}, doi={10.1109/ISPASS64960.2025.00018}
}

@inproceedings{hu2025demystifying,
  title={Demystifying {NCCL}: An In-Depth Analysis of {GPU} Communication Protocols and Algorithms},
  author={Hu, Zhiyi and Shen, Siyuan and Bonato, Tommaso and Jeaugey, Sylvain and Alexander, Cedell and Spada, Eric and Dinan, James and Hammond, Jeff and Hoefler, Torsten},
  booktitle={2025 IEEE Symposium on High-Performance Interconnects (HOTI)}, pages={48--59},
  publisher={IEEE}, address={Piscataway, NJ, USA}, year={2025}, doi={10.1109/HOTI66940.2025.00024}
}

@article{agostini2018gpudirect,
  title={{GPUDirect Async}: Exploring {GPU} Synchronous Communication Techniques for {InfiniBand} Clusters},
  author={Agostini, Elena and Rossetti, Davide and Potluri, Sreeram},
  journal={Journal of Parallel and Distributed Computing}, volume={114}, pages={28--45}, year={2018}, doi={10.1016/j.jpdc.2017.12.007}
}

@article{unat2024landscape,
  title={The Landscape of {GPU}-Centric Communication},
  author={Unat, Didem and Turimbetov, Ilyas and Issa, Mohammed Kefah Taha and Sa\u{g}bili, Do\u{g}an and Vella, Flavio and De Sensi, Daniele and Ismayilov, Ismayil},
  journal={ACM Computing Surveys}, volume={58}, number={12}, pages={1--36}, year={2026}, doi={10.1145/3813799}
}

@inproceedings{wang2020blink,
  title={Blink: Fast and Generic Collectives for Distributed {ML}},
  author={Wang, Guanhua and Venkataraman, Shivaram and Phanishayee, Amar and Devanur, Nikhil and Thelin, Jorgen and Stoica, Ion},
  booktitle={Proceedings of the 3rd Conference on Machine Learning and Systems (MLSys)}, numpages={15},
  publisher={MLSys}, address={Austin, TX, USA}, year={2020},
  url={https://proceedings.mlsys.org/paper_files/paper/2020/hash/cd3a9a55f7f3723133fa4a13628cdf03-Abstract.html}
}

@misc{nvidia2025cudaguide,
  title={{CUDA C++ Programming Guide}: Thread Block Clusters}, author={{NVIDIA Corporation}}, year={2026},
  note={CUDA Programming Guide, accessed October 25, 2026},
  howpublished={\url{https://docs.nvidia.com/cuda/cuda-programming-guide/01-introduction/programming-model.html}}
}

@misc{nvidia2023matmul,
  title={Matrix Multiplication Background User's Guide}, author={{NVIDIA Corporation}}, year={2023},
  note={NVIDIA Deep Learning Performance Documentation, Section 3.2, Wave Quantization},
  howpublished={\url{https://docs.nvidia.com/deeplearning/performance/dl-performance-matrix-multiplication/index.html}}
}

\flushcolsend
\clearpage
\raggedcolsend
\section*{Appendix}
\suppressfloats[t]
\appendix
\setcounter{subsection}{0}
\renewcommand{\thesubsection}{\Alph{subsection}}
\crefalias{subsection}{appendix}

\subsection{Full Collective Measurements}
\label{app:microbench}

This appendix lists the operator measurements behind
\Cref{sec:wave-quant,sec:eval-attribution,sec:eval-scaling}: the
reduce-scatter overlap experiment behind \Cref{sec:wave-quant}
(\Cref{tab:eval-overlap}, plotted in \Cref{fig:eval-cta}), the SM-id
probe of CTA placement (\Cref{tab:smid}), the
training-shape replay of \Cref{fig:eval-mechanism}
(\Cref{tab:eval-shape-replay}), and all-gather and reduce-scatter
latency and buffer memory across 1--8 nodes
(\Cref{tab:eval-ag,tab:eval-rs}).

\begin{figure}[ht]
  \centering
  \includegraphics[width=\columnwidth]{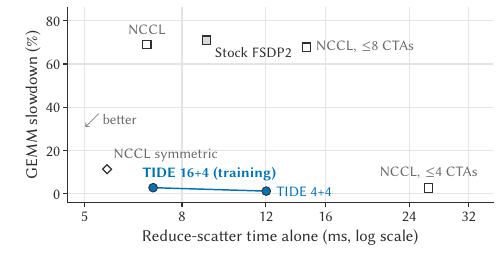}
  \caption{Reduce-scatter time versus one-wave GEMM slowdown
  (\Cref{tab:eval-overlap}). Hollow markers: pre-packed \dtype{fp32}
  input.}
  \Description{Scatter plot of reduce-scatter time alone against the
  slowdown of a concurrent one-wave GEMM. Stock FSDP2 and NCCL
  configurations with default or eight CTAs sit near 68 to 71 percent
  slowdown, and NCCL's symmetric kernel at 11 percent; NCCL capped at
  four CTAs reaches 2.7 percent but takes 26.4 ms; TIDE with 16 or 4
  sum CTAs sits at 1.3 to 2.9 percent and 7.0 to 12.0 ms.}
  \label{fig:eval-cta}
\end{figure}

\begin{table}[ht]
  \centering
  \caption{Concurrent 32~MiB reduce-scatters on 8 nodes at a fixed SM
  clock (medians of three launches). Bold: lowest slowdown among rows
  no slower than Stock FSDP2.}
  \label{tab:eval-overlap}
  \footnotesize
  \setlength{\tabcolsep}{3pt}
  \begin{tabular*}{\columnwidth}{@{\extracolsep{\fill}}lcrrrr@{}}
    \toprule
    & & \textbf{RS alone} & \multicolumn{3}{c}{\textbf{Victim slowdown (\%)}} \\
    \cmidrule(l){4-6}
    \textbf{Configuration} & \textbf{CTAs} & \textbf{(ms)}
    & \textbf{1-wave} & \textbf{28-wave} & \textbf{Elementwise} \\
    \midrule
    \system, training & $16+4$ & 6.96 & \textbf{2.9} & \textbf{4.2} & \textbf{1.6} \\
    \system, 4+4 & $4+4$ & 12.02 & 1.3 & 4.0 & 0.6 \\
    Stock FSDP2 & $16$ & 9.01 & 71.0 & 13.6 & 13.3 \\
    FSDP2+SymmMem & $13$ & 7.45 & 9.7 & 10.6 & 13.6 \\
  \midrule
    NCCL default & $16$ & 6.75 & 69.0 & 13.5 & 7.4 \\
    NCCL symmetric & $13$ & 5.57 & 11.4 & 10.2 & 13.4 \\
    NCCL, $\le8$ CTAs & $\le8$ & 14.61 & 67.8 & 7.0 & 7.3 \\
    NCCL, $\le4$ CTAs & $\le4$ & 26.35 & 2.7 & 3.8 & 4.1 \\
    \bottomrule
  \end{tabular*}\par\vspace{3pt}
  {\footnotesize\raggedright Victims: the one-wave GEMM of
  \Cref{sec:wave-quant}, a GEMM of about 28 waves, and a 256~MiB
elementwise kernel. NCCL rows start from pre-packed \dtype{fp32}
input.\par}
\end{table}

\begin{table}[ht]
  \centering
  \caption{SM-id probe during a 32~MiB reduce-scatter on 8 nodes at a
  fixed SM clock (two launches). The probe launches the one-wave
  GEMM's 32 four-CTA clusters, each CTA filling an SM and recording
  its SM id and entry and exit timestamps; \emph{displaced} counts
  probe CTAs that start after the first wave. Resident CTAs come from
  a second probe with one CTA on each of the 148 SMs.}
  \label{tab:smid}
  \footnotesize
  \setlength{\tabcolsep}{4pt}
  \begin{tabular*}{\columnwidth}{@{\extracolsep{\fill}}lrrr@{}}
    \toprule
    \textbf{Configuration} & \textbf{Displaced} & \textbf{Windows} & \textbf{Resident} \\
     & \textbf{CTAs (mode)} & \textbf{hit (\%)} & \textbf{CTAs (mean)} \\
    \midrule
    \system, training & 0 & 0 & 2.0 \\
    \system, 4+4 & 0 & 0 & 1.5 \\
    Stock FSDP2 & 12 & 65 & 13.4 \\
    FSDP2+SymmMem & 0 & 0 & 12.0 \\
  \midrule
    NCCL default & 12 & 64 & 14.7 \\
    NCCL symmetric & 0 & 7 & 12.6 \\
    NCCL, $\le8$ CTAs & 4 & 63 & 7.3 \\
    NCCL, $\le4$ CTAs & 0 & 0 & 3.5 \\
    \bottomrule
  \end{tabular*}
\end{table}

\begin{table}[ht]
  \centering
  \caption{Training-shape replay at a fixed SM clock (medians of three
  launches of 30 isolated calls).}
  \label{tab:eval-shape-replay}
  \footnotesize
  \setlength{\tabcolsep}{4.0pt}
  \begin{tabular*}{\columnwidth}{@{\extracolsep{\fill}}llrrr@{}}
    \toprule
    & & \textbf{Latency} & \textbf{Victim} & \textbf{SM-ms} \\
    \textbf{Op} & \textbf{System} & \textbf{p50 (ms)}
    & \textbf{slowdown} & \textbf{per call} \\
    \midrule
    \multicolumn{5}{@{}l}{\emph{Llama-3.1 70B}} \\
    AG & Stock FSDP2   & 8.07 & 6.60\% & 218.2 \\
    AG & FSDP2+SymmMem & 7.23 & 3.05\% & 86.3 \\
    AG & \system & \textbf{4.90} & \textbf{0.33\%} & \textbf{0.0} \\
    RS & Stock FSDP2   & 14.12 & 11.61\% & 651.4 \\
    RS & FSDP2+SymmMem & 11.97 & 9.60\% & 584.3 \\
    RS & \system & \textbf{10.60} & \textbf{3.48\%} & \textbf{44.8} \\
    \midrule
    \multicolumn{5}{@{}l}{\emph{Wan2.2}} \\
    AG & Stock FSDP2   & 3.88 & 7.97\% & 104.6 \\
    AG & FSDP2+SymmMem & 3.19 & 2.66\% & 38.3 \\
    AG & \system & \textbf{2.56} & \textbf{0.48\%} & \textbf{0.0} \\
    RS & Stock FSDP2   & 6.12 & 23.17\% & 265.9 \\
    RS & FSDP2+SymmMem & \textbf{5.24} & 20.63\% & 240.1 \\
    RS & \system & 5.58 & \textbf{2.05\%} & \textbf{19.2} \\
    \bottomrule
  \end{tabular*}
\end{table}

\begin{table*}[t]
  \centering
  \caption{All-gather latency, min (median) ms, for $P=30$ tensors.
  ``+ Copy-out'' adds FSDP2's fused \code{split\_with\_sizes\_copy};
  \system lands directly. Bold: the fastest path that yields FSDP2's
  parameter layout; the last column compares medians.}
  \label{tab:eval-ag}
  \footnotesize
  \setlength{\tabcolsep}{3pt}
  \begin{tabular*}{\textwidth}{@{\extracolsep{\fill}}rrcccccc@{}}
    \toprule
    \textbf{Nodes (GPUs)} & \textbf{Shard} & \textbf{\system-AG (0 SM)}
    & \textbf{NCCL Ring} & \textbf{NCCL CE} & \textbf{Ring + Copy-out}
    & \textbf{CE + Copy-out} & \textbf{vs.\ Best FSDP2} \\
    \midrule
    1 (8 GPUs)  & 32 MiB  & \textbf{0.38 (0.38)} & 0.40 (0.41) & 0.35 (0.36) & 0.50 (0.51) & 0.45 (0.46) & \textbf{1.22$\times$} \\
    2 (16 GPUs) & 32 MiB  & \textbf{1.11 (1.12)} & 1.35 (1.59) & 1.08 (1.10) & 1.53 (1.58) & 1.26 (1.28) & \textbf{1.14$\times$} \\
    4 (32 GPUs) & 32 MiB  & \textbf{2.57 (2.60)} & 2.71 (2.79) & 3.14 (3.22) & 3.11 (3.14) & 3.56 (3.62) & \textbf{1.21$\times$} \\
    8 (64 GPUs) & 32 MiB  & \textbf{5.52 (5.86)} & 5.46 (5.78) & 7.25 (7.42) & 6.15 (6.21) & 7.96 (8.19) & \textbf{1.06$\times$} \\
    \midrule
    1 (8 GPUs)  & 128 MiB & \textbf{1.40 (1.41)} & 1.50 (1.52) & 1.27 (1.28) & 1.85 (1.87) & 1.61 (1.63) & \textbf{1.16$\times$} \\
    2 (16 GPUs) & 128 MiB & \textbf{4.19 (4.20)} & 5.29 (6.09) & 4.04 (4.13) & 5.99 (6.84) & 4.73 (4.76) & \textbf{1.13$\times$} \\
    4 (32 GPUs) & 128 MiB & \textbf{9.76 (9.80)} & 10.78 (13.48) & 12.03 (13.27) & 12.42 (13.01) & 13.64 (14.06) & \textbf{1.33$\times$} \\
    8 (64 GPUs) & 128 MiB & \textbf{20.89 (21.41)} & 21.73 (23.41) & 28.01 (30.90) & 24.74 (30.35) & 31.00 (32.75) & \textbf{1.42$\times$} \\
    \bottomrule
  \end{tabular*}
\end{table*}

\begin{table*}[t]
  \centering
  \caption{Reduce-scatter latency, min (median) ms, and buffer
  (GiB/GPU) for $P=16$ \dtype{bf16} gradients. FSDP2 columns include
  \code{\_chunk\_cat}; Bare NCCL starts from packed \dtype{fp32}.
  Bold: lowest among per-parameter paths, and the smaller buffer.}
  \label{tab:eval-rs}
  \footnotesize
  \setlength{\tabcolsep}{3pt}
  \begin{tabular*}{\textwidth}{@{\extracolsep{\fill}}rrccccc@{}}
    \toprule
    \textbf{Nodes (GPUs)} & \textbf{Output} & \textbf{\system-RS} &
    \textbf{Stock FSDP2} & \textbf{FSDP2+SymmMem} & \textbf{Bare NCCL} &
    \textbf{Buffer (\system\,/\,FSDP2)} \\
    \midrule
    1 (8 GPUs)  & 32 MiB  & \textbf{0.41 (0.42)} & 0.59 (0.60) & 0.60 (0.61) & 0.39 (0.40) & 0.29 / 0.28 \\
    2 (16 GPUs) & 32 MiB  & 2.08 (2.12) & 1.67 (1.75) & \textbf{1.16 (1.27)} & 1.30 (1.45) & 0.58 / 0.53 \\
    4 (32 GPUs) & 32 MiB  & 3.70 (3.75) & 3.39 (4.06) & \textbf{3.06 (3.07)} & 2.69 (2.78) & \textbf{0.61 / 1.03} \\
    8 (64 GPUs) & 32 MiB  & \textbf{6.61} (8.12) & 6.82 (7.96) & 6.67 (\textbf{6.84}) & 5.40 (5.52) & \textbf{0.64 / 2.03} \\
    \midrule
    1 (8 GPUs)  & 128 MiB & \textbf{1.29 (1.33)} & 2.23 (2.25) & 2.36 (2.37) & 1.44 (1.46) & \textbf{0.38 / 1.12} \\
    2 (16 GPUs) & 128 MiB & 7.63 (7.68) & 6.70 (6.78) & \textbf{4.62 (4.96)} & 5.17 (5.27) & 2.26 / 2.12 \\
    4 (32 GPUs) & 128 MiB & 13.93 (13.99) & 13.67 (14.02) & \textbf{11.64 (11.70)} & 10.66 (11.30) & \textbf{2.38 / 4.12} \\
    8 (64 GPUs) & 128 MiB & \textbf{25.02 (25.59)} & 27.69 (34.77) & 26.42 (27.16) & 21.62 (23.91) & \textbf{2.51 / 8.12} \\
    \bottomrule
  \end{tabular*}
\end{table*}

\subsection{Scheduler Characterization}
\label{app:clc}

A statically scheduled persistent GEMM launches $N_{\mathrm{SM}}$
long-running CTAs that loop over tiles in software, so a
higher-priority communication kernel must wait for one of them to
retire. Dynamic persistent GEMMs based on Cluster Launch Control
(CLC)~\cite{ptx9}, in contrast, request further work from the hardware
scheduler at tile boundaries. \Cref{tab:clc-scheduling} characterizes
both with two probes during a running persistent GEMM. A high-priority
communication kernel enters a CLC GEMM within two tile boundaries,
whereas static or equal-priority scheduling delays it until the final
GEMM wave. At low communication duty (one reduce-scatter every 16
GEMMs), cooperative CLC sharing slows the GEMM by $1.4\%$ versus $4.9\%$
with 16 statically reserved SMs under the power cap, and by $1.2\%$
versus $13.7\%$ at a fixed clock. This is cooperative scheduling, not
preemption: it favors kernels with few short-lived CTAs, such as
\system's sum and landing kernels, but statically scheduled persistent
GEMMs remain non-preemptible.

\begin{table}[t]
  \centering
  \caption{Scheduler probes during a persistent GEMM. CLC-normal/high:
  normal/high-priority communication.}
  \label{tab:clc-scheduling}
  \footnotesize
  \setlength{\tabcolsep}{3pt}
  \begin{tabular*}{\columnwidth}{@{\extracolsep{\fill}}lcc@{}}
    \toprule
    \textbf{Metric} & \textbf{Static or CLC-normal} & \textbf{CLC-high} \\
    \midrule
    Communication admission & Final GEMM wave & $\le 2$ tile boundaries \\
    Added RS latency & $0.38\text{--}0.90$ ms & $0.09\text{--}0.54$ ms \\
    GEMM penalty & $\approx 0$ (RS waits) & $0.015\text{--}0.14$ ms \\
    \bottomrule
  \end{tabular*}\par\vspace{3pt}
  {\footnotesize\raggedright Admission uses a synthetic communication
  kernel; latency (relative to no GEMM) and penalty use 64--512~MiB NCCL
  reduce-scatters submitted 30\% into the GEMM, across power-capped and
  fixed-clock runs.\par}
\end{table}

\subsection{Convergence}
\label{app:convergence}

\begin{figure}[t]
  \centering
  \includegraphics[width=\columnwidth]{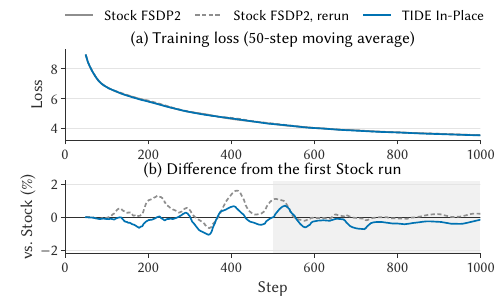}
  \caption{Qwen3-30B-A3B trained from scratch on public C4 (64 GPUs,
  1,000 steps). (a) 50-step moving-average loss. (b) Difference from the
  first Stock run; \Cref{sec:eval-correctness} reports maxima over the
  shaded steps 500--1,000.}
  \Description{Two panels over 1,000 training steps. Top: moving-average
  loss of two Stock FSDP2 runs and TIDE In-Place, which overlap as the
  loss falls from about 6 to 3.5. Bottom: percent difference from the
  first Stock run; after step 500, TIDE stays within 0.73 percent and the
  second Stock run within 1.12 percent.}
  \label{fig:convergence}
\end{figure}

The run in \Cref{fig:convergence} trains Qwen3-30B-A3B (128 experts,
eight active per token) from random initialization with four
4,096-token sequences per GPU. \system uses In-Place; the two Stock
FSDP2 runs use the same configuration, so their difference measures
ordinary run-to-run variation.

\subsection{Latin-Square Collision-Freedom Proof}
\label{app:latin-proof}

\begin{proof}[Proof of \Cref{prop:latin}]
Let $c_0$ be the first contributor in wave $w$ and
$\mathit{len}$ that wave's contributor count.
Within any wave $w \in [0, W)$, the contributor indices form a contiguous
integer interval $[c_0, c_0 + \mathit{len})$ of length
$\mathit{len} \le W_{\mathrm{wave}} \le L$. For all remote contributors
$c \in \{1, \dots, M-1\}$, $c \bmod M = c$. If wave $w$ contains only
remote contributors, their $(c \bmod M)$ values are $\mathit{len} \le L$
consecutive integers, which are pairwise incongruent modulo $L$; thus
$j(s, c) = (s - c) \bmod L$ is injective over wave $w$. The only
contributor with $c \bmod M = 0$ is the own node $c = M$, which appears
exclusively as the final element of the last wave $w = W - 1$ alongside
remote contributors $c \in [(W-1)W_{\mathrm{wave}} + 1, M - 1]$. Because
$W_{\mathrm{wave}} \mid L$ (say $L = q W_{\mathrm{wave}}$ for integer
$q \ge 1$), the open interval $((W-1)W_{\mathrm{wave}}, W \cdot W_{\mathrm{wave}})$
contains no multiple of $W_{\mathrm{wave}}$, and hence no multiple of
$L = q W_{\mathrm{wave}}$. Therefore, every remote contributor in the
final wave satisfies $c \bmod L \ne 0 = (M \bmod M) \bmod L$, and
$j(s, c)$ is strictly injective across every wave $w$.
\end{proof}

For non-divisor test widths $W_{\mathrm{wave}} \nmid L$, the final wave
can straddle a multiple of $L$ alongside the own node $c = M$ (e.g., at
$M = 9, L = 8, W_{\mathrm{wave}} = 3$, wave $w = 2$ holds
$c \in \{7, 8, 9\}$, and $c = 8$ and $c = 9$ both have
$(c \bmod 9) \bmod 8 = 0$), so the planner checks the per-wave inbox
block bitmask and rejects such configurations. An offline regression
test covers $M\in[2,1024]$ at $L=8$: all $458{,}304$ accepted base plans
$(M,W_{\mathrm{wave}},k,me)$, where $me$ selects the memory-efficient
variant, are collision-free and stale-block free, and $65{,}472$
colliding plans are rejected; sub-slice pass counts
$n_{\mathrm{part}}\in\{2,3,4\}$ are checked separately for every
eligible base plan.

\subsection{Pipeline Protocol Details}
\label{app:pipeline-details}

\begin{figure}[t]
  \centering
\begin{tikzpicture}[td, x=1pt, y=1pt]
\def\tdX#1{40+0.9*#1}
\tikzset{
  tdcmp/.style={draw=tdINK!45, line width=0.3pt, fill=tdCMP},
  tdrd/.style={draw=tdSMd, line width=0.35pt, fill=tdSMf},
}
\node[font=\tdsmall, anchor=east, inner sep=1pt] at (36,-36.5) {host};
\node[font=\tdsmall, anchor=east, inner sep=1pt] at (36,-96.5) {GPU};
\foreach \tdt/\tdu/\tdlab in {0/30/$\ell$, 30/60/$\ell{-}1$, 60/90/$\ell{-}2$}{%
  \filldraw[draw=tdINK!60, line width=0.35pt, fill=white] ({\tdX{\tdt}},-33) rectangle ({\tdX{\tdu}},-40);
  \node[font=\tdtiny, inner sep=0pt] at ({\tdX{(\tdt+\tdu)/2}},-36.5) {\tdlab};}
\node[font=\tdtiny, text=tdMUTE, anchor=west, inner sep=2.5pt] at ({\tdX{90}},-36.5) {enqueue};
\foreach \tdt in {36,66}{%
  \filldraw[draw=tdTAX, fill=tdTAXf, line width=0.4pt] ({\tdX{\tdt}},-29.5) circle (1.5pt);}
\node[font=\tdtiny, text=tdTAX, anchor=south, inner sep=1pt] at ({\tdX{51}},-27.5)
  {allocate grads of $\ell{-}1$, $\ell{-}2$};
\draw[draw=tdINK, line width=0.8pt] ({\tdX{58}},-31) -- ({\tdX{58}},-42);
\node[font=\tdtiny, anchor=north west, inner sep=0pt, align=left] at ({\tdX{0}},-44.5)
  {stream-ordered release at\\post\_bwd: $\ell{-}2$ reuses it};
\filldraw[tdcmp] ({\tdX{6}},-93) rectangle ({\tdX{104}},-100);
\node[font=\tdtiny, inner sep=0pt] at ({\tdX{55}},-96.5) {bwd $\ell$};
\filldraw[tdcmp] ({\tdX{104}},-93) rectangle ({\tdX{200}},-100);
\node[font=\tdtiny, inner sep=0pt] at ({\tdX{175}},-96.5) {bwd $\ell{-}1$};
\filldraw[tdrd] ({\tdX{104}},-102.5) rectangle ({\tdX{146}},-108);
\node[font=\tdtiny, inner sep=0pt] at ({\tdX{125}},-105.25) {read $\ell$};
\draw[draw=tdTAX, line width=0.5pt, dash pattern=on 1.5pt off 1pt]
  ({\tdX{146}},-108.5) -- ({\tdX{146}},-29.5);
\node[font=\tdtiny, text=tdTAX, anchor=west, inner sep=1.5pt, align=left] at ({\tdX{146}},-62)
  {GPU-safe release:\\too late, new segments};
\draw[draw=tdTAX, line width=0.45pt, ->] ({\tdX{146}},-29.5) -- ({\tdX{66}+2.2},-29.5);
\end{tikzpicture}
  \caption{Host-time allocation: the host allocates later layers'
  gradients before the GPU finishes reading layer $\ell$, so a GPU-safe
  release is too late for them to reuse its storage.}
  \Description{A host timeline that enqueues three backward layers and
  allocates gradients early, above a GPU timeline whose read of layer
  ell finishes later; the GPU-safe release point lies after the host
  allocations.}
  \label{fig:host-time}
\end{figure}

\paragraph{Rejected release schedules.}
Three alternatives motivate the split used in \Cref{sec:two-phase}.
First, returning read completion while consecutive calls shared one
outbox bank tied the read event to the previous call's network phase and
lengthened compute-stream waits. Second, forcing the compute stream to
wait at a mid-layer release point exposed local-read latency on the
critical path. Third, \code{record\_stream} and event-polling release
preserved GPU safety but occurred too late in host time
(\Cref{fig:host-time}); peak reserved memory grew because allocations
for two later layers had already been enqueued.

\paragraph{Detailed event protocol.}
Let $D_{\mathrm{rem}}=n_{\mathrm{part}}(M-1)$ be the number of remote
destination blocks across all passes. \Cref{alg:pipelined-rs} gives the
full schedule; a missing prefetch gate is treated as an already-complete
event.
Here $J$ indexes remote destination blocks, $\code{outbox}[b]$ is bank
$b$'s reusable send storage, and $s_{\mathrm{comm}}$ is the
communication stream. Events $\code{readyEv}$, $\code{gateEv}$, and
$\code{readEv}$ signal source availability, the prefetch gate, and
source-read completion; $\code{sumEv}[b][J]$ signals local reduction for
destination $J$.

\begin{algorithm}[t]
\caption{Two-Phase Pipelined \system-RS (Epoch $e$, Bank $b=e\bmod 2$)}
\label{alg:pipelined-rs}
\footnotesize
\begin{algorithmic}[1]
\Require Gradients $\{G_p\}$, output $\mathrm{out}$, events
  $\code{readyEv},\code{gateEv},\code{readEv}$, stream $s_{\mathrm{comm}}$
\Statex \textbf{Read phase} (waits only on $\code{readyEv}$)
\For{$J=1\dots\min(n_{\mathrm{out}},D_{\mathrm{rem}})$}
  \State wait until $\code{outbox}[b][J\bmod n_{\mathrm{out}}]$ is reusable
  \State push and sum destination $J$; record $\code{sumEv}[b][J]$
\EndFor
\State sum own-node partials into bank-$b$ spares
\If{$n_{\mathrm{out}}\ge D_{\mathrm{rem}}$} record $\code{readEv}$ \EndIf
\Statex \textbf{Network phase} (behind call $e-1$ and $\code{gateEv}$)
\For{wave $w=0\dots W-1$ with buffer generation $q=w\bmod 2$}
  \For{each $k$-way round in wave $w$}
    \State wait for sums and credits; issue striped puts; wait for arrivals
    \State mark sent outbox blocks reusable
    \If{an unsummed destination remains}
      \State sum it into the newly reusable block; record final
        $\code{readEv}$ when all sources are consumed
    \EndIf
  \EndFor
  \State copy the own-node spare if present; land wave $w$; mark buffer generation $q$ reusable
\EndFor
\State signal credits and record network-order completion
\end{algorithmic}
\end{algorithm}

\paragraph{Protocol invariants.}
The implementation checks four invariants with fault-injection negative
tests: (I1) every contributor lands exactly once and stale blocks from
the previous use of the same buffer generation are zeroed; (I2) gradients or slot copies remain live through
\code{readEv}; (I3) staging, outbox, and inbox storage is reused only
after its previous consumer completes; and (I4) every event or memory-op
wait has an earlier producer in the common collective order.

\paragraph{CUDA hardware work queues.}
With the default eight CUDA connections, blocking memory operations from
the network phase aliased onto the same hardware FIFO as later local
reads and stretched the read span; raising
\code{CUDA\_\allowbreak DEVICE\_\allowbreak MAX\_\allowbreak CONNECTIONS}
to 32 removed the observed aliasing. This is an empirical requirement:
CUDA does not guarantee each stream its own hardware queue.

\subsection{FSDP2 Adapter and Expert Subgroups}
\label{app:adapter-details}

\paragraph{Position-aware ring slots and ring-tail folding.}
Under FSDP2's default prefetch depth of 1, at most two ring groups are
unsharded at once, so group $g$ (in execution order) uses all-gather
slot $g \bmod 2$, and \system sizes each slot for the groups assigned
to it rather than for the global maximum. The final Transformer block
(\code{reshard\_\allowbreak
after\_\allowbreak forward=False}) is live only across the turn-around
boundary and alternates with its predecessor like a ring group; \system
folds it into the 2-slot ring and re-shards it after forward during
no-gradient validation, saving a dedicated full-size slot.

\paragraph{Zero-offset views and PyTorch \code{MemPool} accounting.}
FSDP2 builds unsharded parameters with \code{torch.\allowbreak as\_strided}
at \code{storage\_\allowbreak offset=0}, and EMA validation's
\code{copy.\allowbreak deepcopy} of \code{get\_\allowbreak model\_\allowbreak
state\_\allowbreak dict()} raises \code{invalid python storage} if
a \code{DTensor}'s local tensor has a non-zero storage offset. \system's
C++ adapter therefore roots a fresh \code{at::Storage} at the arena
offset, with
\code{storage\_\allowbreak offset() == 0} and its own version counter. To
make symmetric arenas visible to \code{torch.\allowbreak cuda.\allowbreak
memory\_\allowbreak reserved()}, \system allocates the arena inside a
\code{torch.\allowbreak cuda.\allowbreak MemPool}, registers the
2~MiB-aligned pointer with NCCL, and avoids a reference cycle.

\paragraph{Expert-parallel ($L_{\mathrm{sub}} = 1$) rail lanes and HSDP.}
When MoE routed experts are sharded across $E_{\mathrm{ep}} = L = 8$
modulo subgroups of size $N / E_{\mathrm{ep}}$, each subgroup has
$L_{\mathrm{sub}} = 1$ rank per node ($M$ rail peers, no LSA peers),
and \system specializes both collectives on a dedicated lane.
All-gather runs $\lceil \log_2 M \rceil$
recursive-doubling RDMA rounds directly from landing to landing, with no
receive-stage memory or copy-engine scatters. Reduce-scatter skips
Level~1 staging, sends \dtype{bf16} pieces across four rail parts into a
double-buffered inbox, and uses a fixed eight-source TMA sum to combine the
arriving blocks and own contribution in \dtype{fp32}. Under HSDP, the
shard-group output proceeds to FSDP2's replica all-reduce.

\subsection{Device-Initiated Path and Sizing Solver}
\label{app:gin-solver}

\paragraph{Device-initiated network phase.}
When GIN runs over GDAKI, the optional device path runs the network
phase as one 1-CTA kernel per wave plus a 1-CTA finalizer on a
highest-priority stream. Before launching the kernel for wave $w$, the
host queues waits for that wave's Level~1 sums and, for $w \ge 2$, for
the landing of wave $w-2$, which last used the same inbox buffer generation. In
each round, the kernel waits for per-peer credits, issues striped
\code{gin.put} calls with a strong signal on the last chunk per peer
and RMA context, waits for incoming data signals, and flushes its
contexts before an outbox block is reused. Waits are timed against
\code{\%globaltimer} because SM cycle counters stretch under clock
throttling. On timeout, the kernel records the NIC send-queue state in
mapped host memory and parks in a loop of volatile reads, which the
compiler cannot remove as a side-effect-free infinite loop; its
completion event therefore never fires, and the host monitor exits
fail-stop.

\paragraph{RoCE traffic-class alignment.}
GDAKI queue pairs do not inherit the NIC's global traffic class and
otherwise use a lossy priority that can stall under congestion, so
\system sets \code{NCCL\_GIN\_IB\_TC} to the traffic class of the host
RMA path before creating communicators.

\paragraph{Analytical sizing solver equations.}
With input itemsize $b_{\mathrm{grad}} \in \{2, 4\}$ and wire time
$t_{\mathrm{NIC}} = (M - 1) S_{\mathrm{fp32}} / \mathrm{BW}_{\mathrm{NIC}}$,
the first-order solver selects $C_{\mathrm{sum}}$ and
$C_{\mathrm{mc}}$. Here $\mathrm{BW}_{\mathrm{NIC}}$ is sustained
per-rail throughput, $C_{\mathrm{sum}}$ and $C_{\mathrm{mc}}$ are sum
and landing CTA budgets, $\mathrm{BW}_{\mathrm{sum}}$ and
$\mathrm{BW}_{\mathrm{land}}$ their per-CTA rates, and
$\mathrm{BW}_{\mathrm{cap}}$ the landing fabric ceiling:
\begin{align}
  C_{\mathrm{sum}} &= \left\lceil \gamma_{\mathrm{skew}} \cdot
  \left(\frac{L \, b_{\mathrm{grad}}}{4} + 1\right)
  \frac{\mathrm{BW}_{\mathrm{NIC}}}{\mathrm{BW}_{\mathrm{sum}}} \right\rceil, \label{eq:solver-csum} \\
  C_{\mathrm{mc}} &= \min\!\left(\left\lfloor
  \frac{\mathrm{BW}_{\mathrm{cap}}}{\mathrm{BW}_{\mathrm{land}}} \right\rfloor,\,
  \left\lceil \frac{S_{\mathrm{fp32}} / n_{\mathrm{part}}}{\eta_{\mathrm{tail}} \cdot t_{\mathrm{NIC}} \cdot \mathrm{BW}_{\mathrm{land}}} \right\rceil\right), \label{eq:solver-cmc}
\end{align}
where $\gamma_{\mathrm{skew}} = 1.5$ absorbs arrival skew and
$\eta_{\mathrm{tail}} = 0.05$ bounds the final-wave landing tail to $5\%$
of $t_{\mathrm{NIC}}$. Let
$N_{\mathrm{rounds}}(M-1,k,W_{\mathrm{wave}})$ count network rounds for
$M-1$ contributions at wave width $W_{\mathrm{wave}}$ and at most $k$
transfers per round; $t_{\mathrm{rnd}}$ is calibrated wait overhead.
Round width $k$ starts at $\min(2, M-1)$ and
doubles up to $k_{\max} = 4$ whenever $N_{\mathrm{rounds}}(M-1, k,
W_{\mathrm{wave}}) \cdot t_{\mathrm{rnd}} > 0.10 \cdot t_{\mathrm{NIC}}$
and doubling $k$ reduces the round count. On the evaluation platform,
calibration gives
$\mathrm{BW}_{\mathrm{NIC}}=47.5\text{ GB/s}$,
$\mathrm{BW}_{\mathrm{sum}}=99\text{ GB/s/CTA}$,
$\mathrm{BW}_{\mathrm{land}}=21\text{ GB/s/CTA}$ in situ with a
$90\text{ GB/s}$ fabric ceiling, and
$t_{\mathrm{rnd}}=1.3\text{ ms}$.

\end{document}